\documentclass[oribibl]{llncs}
\usepackage{amsmath,amsfonts, times}
\usepackage[small,it]{caption}

\usepackage{hyperref}

\usepackage{graphicx}

\usepackage{algorithm}
\usepackage{algorithmicx}
\usepackage{algpseudocode}

%----------------------------------------------------------------------------
% Macros added by us

\newcommand{\littlesum}{\mathop{\textstyle \sum}}

\newcommand{\sw}{\ensuremath\mathrm{SW}}
\newcommand{\rev}{\ensuremath\mathrm{Rev}}
\newcommand{\Exp}{\ensuremath\mathbb{E}}

\newcommand{\eps}{\epsilon}    % error
     % horizontal distance
 % class of mechanisms

   % optimum number of points/calls

 % instance space
 % Pareto
 % lower envelope
 % convex hull

 % polynomially related to

 % generic distribution
 %uniform distribution
 % expectation

\newcommand{\R}{\mathbb{R}} % set of real numbers
 % set of real numbers

\title{Efficiency-Revenue Trade-offs in Auctions}
\author{Ilias Diakonikolas\inst{1}\thanks{Supported by a Simons Postdoctoral Fellowship.} \and Christos Papadimitriou\inst{1}\thanks{Research supported by NSF grant CC-0964033 and by a Google University Research Award.} \and George Pierrakos\inst{1}$^\ast$$^\ast$ \and Yaron Singer\inst{2}\thanks{Research supported by a Microsoft Graduate Fellowship and a Facebook Graduate Fellowship.}}
\institute{UC Berkeley, EECS, \email{\{ilias,christos,georgios\}@cs.berkeley.edu}
\and
Google, Inc., \email{yaron@cs.berkeley.edu}}

\begin{document}
\maketitle

\begin{abstract}
When agents with independent priors bid for a single item, Myerson's optimal auction maximizes expected revenue, whereas Vickrey's second-price auction optimizes social welfare.   We address the natural question of  {\em trade-offs} between the two criteria, that is, auctions that optimize, say, revenue under the constraint that the welfare is above a given level.  If one allows for randomized mechanisms, it is easy to see that there are polynomial-time mechanisms that achieve any point in the trade-off (the {\em Pareto curve\/}) between revenue and welfare.  We investigate whether one can achieve the same guarantees using  {\em deterministic} mechanisms.  We provide a negative answer to this question by showing that this is a (weakly) NP-hard problem. On the positive side, we provide polynomial-time deterministic mechanisms that approximate with arbitrary precision any point of the trade-off between these two fundamental objectives for the case of two bidders, even when the valuations are correlated arbitrarily.  The major problem left open by our work is whether there is such an algorithm for three or more bidders with independent valuation distributions.
\end{abstract}
\thispagestyle{empty}

\section{Introduction} \label{sec:intro}

\vspace{-0.24cm}

Two are the fundamental results in the theory of auctions. First, Vickrey
observed that there is a simple way to run an auction so that social
welfare (efficiency) is maximized: The second-price (Vickrey) auction is
optimally efficient, independently of how bidder valuations are distributed.
However, the whole point of the Vickrey auction is to deliberately
sacrifice auctioneer revenue in order to
achieve efficiency. If auctioneer revenue is to be maximized, Myerson showed
in 1980 that, when the bidders' valuations are distributed independently, a straightforward auction (essentially, a clever reduction to Vickrey's auction via an ingenious transformation of valuations) achieves this.

These two criteria, social welfare and revenue, are arguably of singular and paramount
importance. It is therefore a
pity that they seem to be at loggerheads:  It is not hard to establish that optimizing any one of these two criteria can
be very suboptimal with respect to the other.
In other words, there is a substantial {\em trade-off}
between these two important and natural objectives. {\em What are the various
intermediate (Pareto) points of this trade-off? And can each such point be computed
--- or all such points summarized somehow --- in polynomial time?}
This is the fundamental problem that we consider in this paper.  See Figure~\ref{pareto} (a)
for a graphical illustration.

\vspace{-0.3cm}
\begin{figure}[H]
\includegraphics[scale = 0.5]{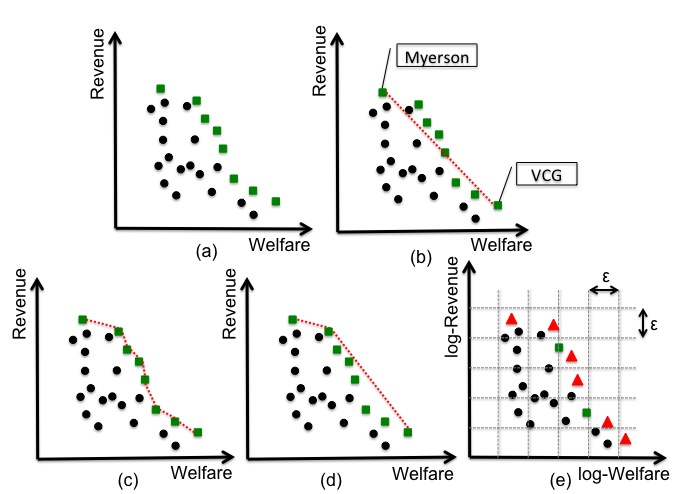}
\caption{The Pareto points of the bi-criterion auction problem are shown as squares (a); the Pareto points may be far off the line connecting the two extremes (b), and may be non-convex (c).  The Pareto points of randomized auctions comprise the upper boundary of the convex closure of the Pareto points (d).  Even though the Pareto set may be exponential in size, for any $\epsilon>0$, there is always a polynomially small set of $\epsilon$-Pareto points, the triangular points in  (e), that is, points that are not dominated by other solutions by more than $\epsilon$ in any dimension.  We study the problem of computing such a set in polynomial time.}\label{pareto}
\end{figure}
\vspace{-0.3cm}

The problem of exploring the revenue/welfare trade-off in auctions turns out to be a rather sophisticated problem, defying several
naive approaches.  One common-sense approach is to simply randomize between the 
optima of the two extremes, Vickrey's and Myerson's auctions.  This can produce very poor results, since it only explores the straight line
joining the two extreme points, which can be very far from the true trade-off (Figure~\ref{pareto} (b)).
A second common-sense approach is the so-called {\em slope search}:  To explore the trade-off space, 
just optimize the objective ``revenue + $\lambda\cdot$ welfare'' for various values of $\lambda >0$.
By modifying Myerson's auction this objective can indeed be optimized efficiently, as 
it was pointed out seven years ago by Likhodedov and Sandholm~\cite{DBLP:conf/sigecom/LikhodedovS04}.
The problem is that the trade-off curve may not be convex (Figure~\ref{pareto} (c)),
and hence the algorithm of ~\cite{DBLP:conf/sigecom/LikhodedovS04} can  
miss vast areas of trade-offs:  

\begin{proposition} \label{prop:convex}
There exist instances with two bidders with monotone hazard rate distributions for which the Pareto curve is not convex; in contrast the Pareto curve is always convex for one bidder with a monotone hazard rate distribution. 
\end{proposition}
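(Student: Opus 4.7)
My plan is to handle the two halves of the proposition separately: the (easy) affirmative one-bidder direction by a direct calculus computation, and the negative two-bidder direction by exhibiting an explicit asymmetric example.

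\textbf{Single bidder, MHR, convexity.} With one bidder any deterministic IC mechanism reduces to a posted price $r\ge 0$, with welfare $W(r)=\int_r^\infty v f(v)\,dv$ and revenue $R(r)=r(1-F(r))$. As $r$ grows from $0$ to the monopoly reserve $r^\ast$ (the root of $\psi(r)=r-(1-F(r))/f(r)$), welfare decreases and revenue increases, and beyond $r^\ast$ both decrease, so Pareto points are exactly $\{(W(r),R(r)): r\in[0,r^\ast]\}$. I would then compute
\[
\frac{dW}{dr}=-rf(r),\qquad \frac{dR}{dr}=(1-F(r))\bigl(1-rh(r)\bigr),
\]
where $h=f/(1-F)$, so $\frac{dR}{dW}=1-\frac{1}{rh(r)}$. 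Differentiating in $r$ gives $\frac{h(r)+rh'(r)}{(rh(r))^2}\ge 0$ because $h'\ge 0$ under MHR, and since $dW/dr<0$ we obtain $\frac{d^2R}{dW^2}\le 0$. Thus $R$ is a concave function of $W$ on the Pareto range, i.e.\ the Pareto curve (upper boundary) is convex.

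\textbf{Two bidders, MHR, non-convexity.} For two bidders the space of deterministic IC mechanisms is much richer: by the usual virtual-value characterization, Pareto-optimal deterministic mechanisms can be taken to be second-price auctions with bidder-specific reserves $(r_1,r_2)$ (no ironing is required under MHR). Vickrey corresponds to $(0,0)$ and Myerson to $(\psi_1^{-1}(0),\psi_2^{-1}(0))$. To produce a non-convex frontier I would pick an asymmetric instance with two distinct MHR distributions---say bidder~$1$ uniform on $[0,1]$ and bidder~$2$ drawn from an MHR distribution whose mass is concentrated around a value $a\in(0,1)$---so that increasing $r_2$ from $0$ through an interval below $a$ sheds welfare while gaining little revenue, but once $r_2$ crosses $a$ the revenue improves sharply. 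Evaluating $(W(r_1,r_2),R(r_1,r_2))$ at three suitably chosen reserve vectors should yield a point on the frontier that lies strictly below the chord joining its two neighbors.

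\textbf{Main obstacle.} The one-bidder direction is clean. The real work on the two-bidder side is (i) choosing the distributions so that the arithmetic produces a genuinely non-convex triple, and (ii) certifying that the middle mechanism actually lies on the Pareto frontier, i.e.\ that no other deterministic IC mechanism dominates it at that welfare level. For step (ii) I would invoke the fact that, under MHR, optimal deterministic mechanisms for every convex combination of revenue and welfare are virtual-welfare maximizers with bidder-specific reserves, reducing the verification to a two-parameter search over $(r_1,r_2)$, which can be carried out by direct computation of two integrals. Given an explicit numerical example, checking strict non-convexity is then a finite calculation.
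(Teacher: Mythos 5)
Your one-bidder argument is correct and is essentially the paper's own proof in different notation: the paper checks the parametric convexity condition $x'(r)y''(r)-y'(r)x''(r)\ge 0$ for $x=\sw[M(r)]$, $y=\rev[M(r)]$ and reduces it to the MHR inequality $(1-F(r))f'(r)+(f(r))^2\ge 0$, which is exactly your condition $h'\ge 0$; your restriction to $r\in[0,r^\ast]$ and the chain-rule computation of $d^2R/dW^2\le 0$ are fine.

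The two-bidder half has a genuine gap. First, the structural claim you lean on is false: for asymmetric bidders the deterministic maximizers of $\rev+\lambda\cdot\sw$ are pointwise maximizers of the weighted virtual surplus $\phi_i(v_i)+\lambda v_i$, not second-price auctions with bidder-specific reserves (already Myerson's revenue-optimal auction for two non-identical MHR distributions may award the item to the lower-valued bidder), so you cannot assume the Pareto frontier is swept out by a two-parameter family $(r_1,r_2)$. Second, and more fundamentally, certifying that your middle mechanism is undominated by appealing to scalarization is self-defeating: any point that maximizes some convex combination of revenue and welfare lies on the upper boundary of the convex hull of the objective space, whereas the whole content of the proposition is that some Pareto-optimal points do not --- this is precisely why the paper concludes that the slope-search approach of Likhodedov and Sandholm can miss parts of the trade-off. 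To place the middle point on the Pareto frontier you need an upper bound on the revenue of \emph{every} deterministic monotone mechanism attaining at least that welfare, which in your continuous example is an optimization over all monotone allocation rules on $[0,1]^2$, not a finite search over $(r_1,r_2)$. The paper sidesteps this entirely by choosing a two-bidder instance with support-two distributions (any binary-valued distribution is automatically MHR): with supports $(11,20)$ and $(2,5)$ and probabilities $(1/3,2/3)$ the set of feasible deterministic auctions is finite and small, so the entire objective space can be enumerated and the non-convexity of the Pareto set read off directly. Your construction could be repaired the same way (take a small discrete instance and enumerate), or by supplying an honest global bound on constrained revenue at the middle welfare level; as written, the verification step does not go through.
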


The proof is deferred to Appendix~\ref{app:convex}. It follows that the slope search approach of ~\cite{DBLP:conf/sigecom/LikhodedovS04}
is incorrect.  However, the correctness of the slope search approach
is restored if one is willing to settle for randomized mechanisms:   The trade-off space of randomized mechanisms
is always convex (in particular, it is the convex hull of the deterministic mechanisms, (Figure~\ref{pareto} (d)).  It is easy to see (and it had been actually worked out for different purposes already in \cite{MS}) that the optimum randomized mechanism with respect to the metric ``revenue + $\lambda\cdot$ welfare'' is easy to calculate.

\begin{proposition}\label{prop:randomized}
The optimum randomized mechanism for the objective ``revenue + $\lambda\cdot$ welfare'' can be computed in polynomial time.  Hence, any point of the revenue/welfare trade-off for randomized mechanisms can be computed in polynomial time.
\end{proposition}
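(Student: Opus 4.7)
The plan is to reduce maximizing $\rev + \lambda \cdot \sw$ to a pointwise allocation problem via Myerson's virtual-value identity, and then to derive every Pareto point from the convexity of the randomized feasible region.

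For the first part I would rewrite the objective as
\[
\rev + \lambda \cdot \sw \;=\; \E\Bigl[\littlesum_i \bigl(\phi_i(v_i) + \lambda\, v_i\bigr)\, x_i(v)\Bigr] \;=\; \E\Bigl[\littlesum_i \psi_i(v_i)\, x_i(v)\Bigr],
\]
where $\phi_i$ is the Myerson virtual value of bidder $i$ and $\psi_i(v_i) \eqdef \phi_i(v_i) + \lambda\, v_i$ is a ``twisted'' virtual value; the first equality is Myerson's lemma that expected revenue equals expected virtual welfare for any BIC allocation rule, and the second is definitional. The objective is thus a linear functional of the allocation rule with coefficients computable in polynomial time from the marginals and $\lambda$. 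The pointwise maximizer --- award the item to $\arg\max_i \psi_i(v_i)$ when this is nonnegative, otherwise keep it --- is optimal ignoring incentives but may violate monotonicity of $x_i$ in $v_i$. Since randomization is allowed, I would apply Myerson's ironing procedure to each $\psi_i$, a convex-hull computation on the cumulative curve that runs in polynomial time for discretely supported or black-box priors, to obtain its monotonized version $\bar\psi_i$, and then allocate according to $\arg\max_i \bar\psi_i(v_i)$ with uniform tie-breaking. This rule is monotone, extends via Myerson payments to a BIC randomized mechanism, and by the standard fact that $\psi_i$ and $\bar\psi_i$ agree in expectation on each ironed interval, it maximizes the twisted virtual welfare, hence $\rev + \lambda \cdot \sw$. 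The ironing step is the only place any care is needed; everything else is bookkeeping.

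The second claim follows from the first by convexity: the set of $(\sw,\rev)$ pairs achievable by randomized mechanisms is convex, since a lottery over two mechanisms realizes the corresponding convex combination of their welfare/revenue pairs. Hence every Pareto point is attained as a maximizer of $\rev + \lambda \cdot \sw$ for some $\lambda \geq 0$ (possibly as a mixture of two adjacent extreme Lagrangian optima at a breakpoint), and a prescribed point of the trade-off can be located by binary search over $\lambda$ using the algorithm above as a subroutine, hitting a target welfare level exactly via a final convex combination of two adjacent Lagrangian optima.
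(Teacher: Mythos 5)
Your proposal is correct and follows essentially the same route the paper intends: the paper itself gives no detailed argument for Proposition~\ref{prop:randomized}, relying on the observation (attributed to prior work, cf.\ the Likhodedov--Sandholm modification of Myerson) that the $\lambda$-weighted objective is maximized by allocating to the highest ironed combined virtual value $\phi_i(v_i)+\lambda v_i$, together with convexity of the randomized trade-off space so that slope search recovers every Pareto point. Your write-up simply fills in these standard steps (ironing, the convex-combination/binary-search bookkeeping for hitting a prescribed trade-off point), so there is nothing to correct.
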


\subsection{Our results}
{\em In this paper we consider the problem of exploring the revenue/welfare trade-off for {\em deterministic} mechanisms, and show that it is an intractable problem in general, even for two bidders (Theorem~\ref{thm:2p-hard}).}   Comparing with Proposition~\ref{prop:randomized}, this result adds to the recent surge in literature pointing out complexity gaps between randomized and deterministic mechanisms \cite{DBLP:journals/corr/abs-1011-1279,DBLP:journals/corr/abs-1011-2413,DBLP:conf/stoc/DughmiRY11,DBLP:conf/stoc/Dobzinski11}.  Randomized mechanisms are of course a powerful and useful analytical concept, but it is deterministic mechanisms and auctions that we are chiefly interested in. Hence such complexity gaps are meaningful and onerous.   We also show that there are instances for which the set of Pareto optimal mechanisms has exponential size.

On the positive side, we show that the problem can be solved for two bidders, even for correlated valuations (Theorem~\ref{thm:fptas}).  By ``solved'' we mean that any trade-off point can be approximated with arbitrarily high precision in polynomial time in both the input and the precision --- that is to say, by an FPTAS.  It also means (by results in~\cite{DBLP:conf/focs/PapadimitriouY00}) that an approximate summary of the trade-off (the $\epsilon$-Pareto curve), of polynomial size (Figure 1(e)), can be computed in polynomial time.   The derivation of the two-bidders algorithm (see Section~\ref{ssec:DP}) is quite involved.  We first find a pseudo-polynomial dynamic programming algorithm for the problem of finding a mechanism with welfare (resp. revenue) {\em exactly} a given number.  This algorithm is very different from the one in~\cite{DBLP:journals/corr/abs-1011-1279} for optimal auctions in the two bidder case, but it exploits the same feature of the problem, namely its planar nature.  We then recall Theorem 4 of~\cite{DBLP:conf/focs/PapadimitriouY00} (Section~\ref{sec:prelims}) which establishes a connection between such pseudo-polynomial algorithms for the exact problems and FPTAS for the trade-off problem. However, the present problem violates several key assumptions of that theorem, and a custom reduction to the exact problem is needed.

Unfortunately for three or more bidders the above approach no longer works; this is not surprising since, as it was recently shown in~\cite{DBLP:journals/corr/abs-1011-1279}, just maximizing revenue is an APX-hard problem in the correlated case.  The main problem left open in this work is whether there is an FPTAS for three or more bidders with {\em independent} valuation distributions.

We also look at another interesting case of the $n$-bidder problem, in which the valuation distributions have support two.   This case is of  some methodological interest because, in general, $n$-dimensional problems of this sort in mechanism design have not been characterized computationally, because of the difficulty related to the exponential size of the solution sought; binary-valued bidders have served as a first step towards the understanding of auction problems in the past, for example in the study of optimal {\em multi-object} auctions~\cite{RePEc:bla:restud:v:67:y:2000:i:3:p:455-81}. We show that the trade-off problem is in PSPACE and (weakly) NP-hard (Theorem~\ref{thm:many}).   

\subsection{Related work}
Although~\cite{DBLP:conf/sigecom/LikhodedovS04} appears to be the only previous paper explicitly  treating optimal auction design as a multi-objective optimization problem, there has been substantial work in studying the relation of the two objectives. The most prominent paper in the area is that of Bulow and Klemperer~\cite{RePEc:nbr:nberwo:4608} who show that the revenue benefits of adding one  extra bidder and running the efficiency-maximizing auction surpasses those of running the revenue-maximizing auction. In~\cite{DBLP:conf/sigecom/AggarwalGM09} the authors show that for valuations drawn independently from the same monotone hazard rate distribution, an analogous theorem holds for efficiency: by adding $\Theta(\log n)$ extra bidders and running Myerson's auction, one gets at least the efficiency of Vickrey's auction. This paper also shows that for these distributions both the welfare and the revenue ratios between Vickrey and Myerson's auctions are bounded by $1/e$: in our terms this implies that the extreme points of the Pareto curve lie within a constant factor of each other and so constant factor approximations are trivial; we note that no such constant ratios are known for more general distributions (not even for the case of regular distributions), assuming of course that the ratio between all bidders' maximum and minimum valuation is arbitrary. This kind of revenue and welfare ratios are also studied in~\cite{RS07} for keyword auctions (multi-item auctions), and in~\cite{RePEc:eee:gamebe:v:43:y:2003:i:2:p:214-238} for single-item english auctions and valuations drawn from a distribution with bounded support. In~\cite{DBLP:journals/corr/abs-1005-1121} the authors present some tight bounds for the efficiency loss of revenue-optimal mechanisms, which depend on the number of bidders and the size of the support.  Finally, and very recently, \cite{DP} gives simple auctions (in particular, second-price auctions with appropriately chosen reserve prices) that simultaneously guarantee a 20\% fraction of both the optimal revenue and the optimal social welfare, when bidders' valuations are drawn independently from (possibly different) regular distributions: in multiobjective optimization parlance, their auctions belong to the {\em knee} of the Pareto curve. In this work (Section~\ref{sec:fptas}) we provide an algorithm for approximating {\em any} point of the Pareto curve within arbitrary precision, albeit sacrificing the simplicity of the auction format. 

\section{Preliminaries} \label{sec:prelims}
\vspace{-0.24cm}

\subsection{Bayesian Mechanism Design} 

\vspace{-0.2cm}

We are interested in auctioning a single, indivisible item to $n$ bidders. We assume every bidder $i$ has a private valuation $v_i$ for the item and that her valuation is drawn from some discrete probability distribution over support of size $h_i$ with probability density function $f_i(\cdot)$. We use $v_i^k$ and $f_i^k,\, k=1, \ldots, h_i$, to denote the $k$-th smallest element in the support of bidder $i$ and its probability mass respectively.

Formally an auction consists of an allocation rule $x_i(v_1,\ldots,v_n)$, the probability of bidder $i$ getting allocated the item, and a payment rule $p_i(v_1,\ldots,v_n)$ which is the price paid by bidder $i$. In this paper we focus our attention on deterministic mechanisms so that $x_i(\cdot)\in \{0,1\}$. We demand from our auctions to satisfy the two standard constraints of ex-post incentive compatibility (IC) and individual rationality (IR); it is well known~\cite{Nisan2007} that any such  auction has the following special form: if we fix the valuation of all bidders except for bidder $i$, then there is a threshold value $t_i(v_{-i})$, such that bidder $i$ only gets the item for values $v_i\geq t_i(v_{i})$ and pays $t_i(v_{-i})$. In particular one can show that, for the discrete setting and for the objectives of welfare and revenue we are interested in, we can wlog assume that the threshold values $t_i$ of any Pareto optimal auction will always be on the support of bidder $i$.

Relying on the above characterization, we will describe our mechanisms using the concept of an {\em allocation matrix} $A$: a $h_1\times\ldots\times h_n$ matrix where entry $(i_1,\ldots,i_n)$ corresponds to the tuple $(v_1^{i_1},\ldots,v_n^{i_n})$ of bidder's valuations. Each entry takes values from $\{0,1,\ldots,n\}$ indicating which bidder gets allocated the item for the given tuple of valuations, with $0$ indicating that the auctioneer keeps the item.  In order for an allocation matrix to correspond to a valid (ex-post IC and IR) auction a necessary and sufficient condition is the following {\em monotonicity constraint}: if $A[i_1,\ldots,i_j,\ldots,i_n] = j$ then $A[i_1,\ldots,k,\ldots,i_n] = j$ for all $k\geq i_j$. Notice that the payment of the bidder who gets allocated the item can be determined as the least value in his support for which he still gets the item, keeping the values of the other bidders fixed; moreover, when there is only a constant number of bidders, the allocation matrix provides a polynomial representation of an auction. 

\subsection{Multi-Objective Optimization}\label{ssec:mo}

\vspace{-0.2cm}

Trade-offs are present everywhere in life and science --- in fact, one can argue
that optimization theory studies the very special and degenerate case in
which we happen to be interested in only one objective. There is a long research
tradition of {\em multi-objective} or {\em multi-criterion optimization},
developing methodologies for computing the trade-off points (called the
{\em Pareto set}\/) of optimization problems with many objectives, see for
example~\cite{Cli,Ehr,Mit}. However, there is a computational awkwardness
about this problem: Even for simple cases, such as bicriterion shortest paths,
the Pareto set (the set of all undominated feasible solutions) can be exponential,
and thus it can never be polynomially computed. In 2000, Papadimitriou and
Yannakakis \cite{DBLP:conf/focs/PapadimitriouY00} identified a sense in which this is a meaningful
problem: They showed that there is {\em always} a set of solutions of
polynomial size that are {\em approximately} undominated, within arbitrary
precision; a multi-objective problem is considered tractable if such a set can be computed in
polynomial time. Since then, much progress has been made in the algorithmic theory
of multi-objective optimization~\cite{VY,DY1,DY2,FRS09,DDY,CVZ11,diakonikolas}, and much methodology has been developed, some of which has been applied to mechanism design before~\cite{DBLP:conf/soda/GrandoniKLV10}. In this paper we use this methodology for studying Bayesian auctions under the two criteria of expected revenue and social welfare.

\smallskip

\noindent {\bf The {\sc Bi-Criterion Auction} problem.}
We want to design deterministic auctions that perform favorably with respect to
(expected) social welfare, defined as $\sw=\Exp[\sum_i x_i v_i]$ and (expected) revenue, defined as $\rev=\Exp[\sum_ip_i]$. Based on the aforementioned characterization with allocation matrices,
we can view an auction as a feasible solution to a combinatorial problem.
An instance specifies the number $n$ of bidders and for each bidder its distribution on valuations.
The size of the instance is the number of bits needed to represent these distributions. We map solutions
(mechanisms) to points $(x,y)$ in the plane, where we use the $x$-axis for the welfare and the $y$-axis for the revenue. The objective space is the set of such points.

Let $p,q \in \mathbb{R}^2_{+}$. We say that $p$ dominates $q$ if $p \geq q$ (coordinate-wise). We say that $p$ $\eps$-covers
$q$ ($\eps \geq 0$) if $p \ge q / (1+\eps)$. Let $A \subseteq \R^2_+$. The Pareto set of $A$, denoted by $P(A)$, is the subset of
undominated points in $A$ (i.e. $p \in P(A)$ iff $p \in A$ and no other point in $A$ dominates $p$).
We say that $P(A)$ is {\em convex} if it contains no points that are dominated by convex combinations of other points.
Given a set $A \subseteq \R^2_+$ and $\eps>0$, an {\em $\epsilon$-Pareto set} of $A$, denoted by $P_{\epsilon}(A)$, is
a subset of points in $A$ that $\eps$-cover all vectors in $A$.
Given two mechanisms $M, M'$ we define domination between them according to the $2$-vectors of their objective values.
This naturally defines the Pareto set and approximate Pareto sets for our auction setting.

\vspace{-0.06cm}

As shown in~\cite{DBLP:conf/focs/PapadimitriouY00}, for every instance and $\epsilon>0$, there exists an $\epsilon$-Pareto set of polynomial size.
The issue is one of efficient computability. There is a simple necessary and sufficient condition,
which relates the efficient computability of an $\epsilon$-Pareto set to the following {\it GAP
Problem}: given an instance $I$, a (positive rational) $2$-vector $b=(W_0, R_0)$, and a rational $\delta>0$, either return a
mechanism $M$ whose $2$-vector dominates $b$, i.e. $\sw(M) \geq W_0$ and $\rev(M) \geq R_0$,
or report that there does {\em not} exist any mechanism that is better than $b$ by at least a $(1+\delta)$ factor in both coordinates,
i.e. such that $\sw(M) \geq (1+\delta)\cdot W_0$ and $\rev(M) \geq (1+\delta) \cdot R_0$.
There is an FPTAS for constructing an
$\epsilon$-Pareto set iff there is an FPTAS for the GAP Problem~\cite{DBLP:conf/focs/PapadimitriouY00}.

\vspace{-0.1cm}

\begin{remark}
Even though our exposition focuses on discrete distributions, our results easily extend to continuous distributions as well. As in~\cite{DBLP:journals/corr/abs-1011-1279},  given a sufficiently smooth continuous density (say Lipschitz-continuous), whose support lies in a finite interval $[\underline{v},\overline{v}]$,\footnote{This is the standard approach in economics, see for example~\cite{RePEc:nwu:cmsems:362}.} we can appropriately discretize (while preserving the optimal values within $O(\eps)$) and run our algorithms on the discrete approximations.
\end{remark}

\noindent {\bf From exact to bi-criterion.}
We will make essential use of a result from \cite{DBLP:conf/focs/PapadimitriouY00} reducing the multi-objective version of a linear optimization problem $A$ to its exact version:   
Let $A$ be a discrete linear optimization problem whose objective function(s) have {\em non-negative} coefficients. The {\em exact version} of a $A$ is the following problem: Given an instance $x$ of $A$, and a positive rational $C$, is there a feasible solution with objective function value {\em exactly} $C$?
For such problems, a pseudo-polynomial algorithm for the exact version of implies an FPTAS for the multi-objective version:

\begin{theorem}[\cite{DBLP:conf/focs/PapadimitriouY00}]\label{thm:PY2000}
Let $A$ be a {\em linear} multi-objective problem whose objective functions have {\em non-negative} coefficients: 
If there exists a pseudo-polynomial algorithm for the {\em exact version} of $A$,
then there exists an FPTAS for constructing an approximate Pareto curve for $A$.
\end{theorem}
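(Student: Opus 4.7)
The plan is to apply the characterization recalled just before the theorem statement: an FPTAS for the $\epsilon$-Pareto curve exists iff an FPTAS for the GAP problem exists. So it suffices to design an FPTAS for GAP that uses the pseudo-polynomial exact oracle as a subroutine.

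First I would reduce the ``two-dimensional exact pair'' problem to the one-dimensional exact problem that the hypothesis gives us. Given target $(W^{*}, R^{*})$, to decide whether some feasible $s$ simultaneously attains $f_{1}(s) = W^{*}$ and $f_{2}(s) = R^{*}$, I exploit the non-negativity of the coefficients and combine the two objectives into one linear objective $g(s) = M \cdot f_{1}(s) + f_{2}(s)$, with $M$ chosen strictly larger than any attainable value of $f_{2}$. Then $g(s) = M \cdot W^{*} + R^{*}$ iff both targets hold exactly, and $g$ is itself a linear objective with non-negative coefficients, so the given exact algorithm applies with scalar target $C = M\cdot W^{*} + R^{*}$.

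Next, to solve a GAP instance $(b, \delta)$ with $b = (W_{0}, R_{0})$, I would lay a geometric grid of ratio $1 + \delta/3$ on $[W_{0}, V_{\max}] \times [R_{0}, V_{\max}]$, where $V_{\max}$ is a polynomial upper bound on the objective values; this gives $O((\log V_{\max})^{2}/\delta^{2})$ candidate target pairs, polynomial in the input size and in $1/\delta$. For each grid target dominating $b$, I invoke the two-dimensional exact query constructed above. If some call returns a witness, output it; otherwise report that no solution $(1+\delta)$-dominates $b$. Correctness follows because any hypothetical $(1+\delta)$-dominator of $b$ would round, under the geometric grid, to a grid point that still dominates $b$ and would therefore have been found by the sweep.

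The main obstacle is running time: the exact oracle is only \emph{pseudo}-polynomial in its target $C$, and here $C \sim M \cdot V_{\max}$ can be exponentially large in the raw input. The standard remedy (also from \cite{DBLP:conf/focs/PapadimitriouY00}) is a coefficient-scaling step: round each objective coefficient so that every attainable value becomes an integer in $[0, \poly(n, 1/\delta)]$. This simultaneously shrinks $M$ and $C$ to polynomial magnitude (so each exact call runs in truly polynomial time) while perturbing every attainable objective value by at most a multiplicative $(1 + O(\delta))$ factor, which is absorbed into the $(1+\delta)$ slack of GAP. Tuning the grid spacing, the rounding resolution, and the multiplier $M$ simultaneously as polynomials in $n$ and $1/\delta$ assembles these ingredients into an FPTAS for GAP, and hence, by the cited equivalence, into an FPTAS for the $\epsilon$-Pareto curve.
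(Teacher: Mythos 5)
This theorem is not proved in the paper at all --- it is quoted from Papadimitriou--Yannakakis \cite{DBLP:conf/focs/PapadimitriouY00} --- so the comparison is with the proof in that reference. Your overall architecture is in the right spirit and close to the cited proof: reduce to the GAP problem, rescale/round the objective coefficients so the exact oracle's pseudo-polynomial running time becomes genuinely polynomial, and encode the two targets into a single scalar target $C=M\cdot W^{*}+R^{*}$ with a multiplier $M$ exceeding the range of $f_2$ (this base-$M$ packing is legitimate because an instance of a \emph{linear} problem may carry arbitrary non-negative coefficients, so the combined objective is again an instance of the same exact problem; note it also requires the values to be integers, i.e.\ it must come \emph{after} the rounding step).

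There is, however, a genuine gap in your coverage argument. You query the exact oracle only at the $O((\log V_{\max})^2/\delta^2)$ points of a geometric grid, and argue that a hypothetical solution $s$ with $f_1(s)\geq(1+\delta)W_0$ and $f_2(s)\geq(1+\delta)R_0$ ``rounds to a grid point dominating $b$.'' But the exact oracle only detects solutions whose value equals the queried target \emph{exactly}; the pair $(f_1(s),f_2(s))$ generally lies strictly between grid points, so every one of your queries can legitimately answer ``no'' even though a $(1+\delta)$-dominator exists, and your algorithm would then wrongly certify the negative GAP answer. The correct mechanism (and the one in \cite{DBLP:conf/focs/PapadimitriouY00}) is different in exactly this step: after truncating coefficients that exceed the relevant thresholds and rounding with granularity $\Theta(\delta W_0/n)$ resp.\ $\Theta(\delta R_0/n)$, \emph{every} attainable (rounded) value pair is an integer point in a box of size $\mathrm{poly}(n,1/\delta)$, and one queries \emph{all} such integer pairs (equivalently, all relevant values of the combined objective), so any dominating solution is hit at its own rounded value, and the rounding error is absorbed in the $(1+\delta)$ slack. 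Relatedly, your claim that rounding alone puts every attainable value in $[0,\mathrm{poly}(n,1/\delta)]$ is not true when some coefficient is huge compared to $W_0$ or $R_0$; the truncation/capping step is needed for that as well. With the geometric grid replaced by this exhaustive enumeration over the rounded value space, your proof becomes essentially the cited one.
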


To obtain our main algorithmic result (Theorem~\ref{thm:fptas}), we design a pseudo-polynomial algorithm for the exact version of the {\sc Bi-Criterion Auction} problem and apply Theorem~\ref{thm:PY2000} to deduce the existence of an FPTAS. However, it is not obvious why {\sc Bi-Criterion Auction} satisfies the condition of the theorem, since in the standard representation of the problem as a linear problem, the objective functions typically have negative coefficients. We show however (Lemma~\ref{lem:discrete-case}) that there exists an alternate representation with monotonic linear functions.

\section{The complexity of Pareto optimal auctions}\label{sec:LB}

Our main result in this section is that -- in contrast with randomized auctions -- designing deterministic Pareto optimal auctions under welfare and revenue objectives is an intractable problem; in particular, we show that, even for $2$ bidders~\footnote{Note that for a single bidder, one can enumerate all feasible mechanisms in linear time.} whose distributions are independent and regular, the problem of maximizing one criterion subject to a lower bound on the other is (weakly) NP-hard.

\begin{theorem}\label{thm:2p-hard}
For two bidders with independent regular distributions, it is NP-hard to decide whether there exists an auction with welfare at least $W$ and revenue at least $R$.
\end{theorem}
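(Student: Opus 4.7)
The plan is to reduce the weakly NP-complete Partition problem to this decision question. Given items $a_1,\ldots,a_n$ with $\sum_i a_i = 2T$, I will construct a two-bidder instance with independent regular distributions and thresholds $W,R$ so that a deterministic mechanism achieving $\sw \ge W$ and $\rev \ge R$ exists iff some subset of the $a_i$ sums exactly to $T$. Since the distributions in the construction will have only $O(n)$ support points and all numeric data will be of polynomial bit-length, this yields the desired (weak) NP-hardness.

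For the construction I take bidder $1$ to have the two-point distribution on $\{L,H\}$ with probabilities $(p_L,p_H)$ chosen so that $L<p_H H$ (this is automatically regular), and bidder $2$ to have support $\{v_2^1,\ldots,v_2^n,v_2^{n+1}\}$ with $f_2^j = \alpha a_j$ for $j\le n$ (suitably normalized) and the remaining mass $f_2^{n+1}$ sitting on a very large value $v_2^{n+1}$. The $v_2^j$ for $j\le n$ will be chosen inductively, small compared to $L$ but with gaps large enough to make the virtual-value sequence $\phi_2(v_2^j) = v_2^j - (1-F_2(v_2^j))/f_2^j$ nondecreasing, so that bidder $2$ is regular as well. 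With $v_2^{n+1}$ enormous, any competitive mechanism has bidder $2$ winning in both rows of column $n+1$, and in each column $j\le n$ the two-dimensional monotonicity constraint leaves, after eliminating dominated shapes, only three Pareto-competitive choices: (a) nobody wins, contributing $(0,0)$; (b) only bidder $1$ at row $H$ wins, contributing $(p_H H f_2^j,\,p_H H f_2^j)$; or (c) bidder $1$ wins in both rows, contributing $((p_L L + p_H H)f_2^j,\,L f_2^j)$. Option (a) is strictly dominated by (b), and switching column $j$ from (b) to (c) trades $+p_L L f_2^j$ in welfare against $-(p_H H - L)f_2^j$ in revenue, a trade-off proportional to $a_j$.

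Let $S\subseteq\{1,\ldots,n\}$ denote the set of columns where (c) is chosen. The total welfare is $\mathrm{const}_W + p_L L \alpha \sum_{j\in S}a_j$ and the total revenue is $\mathrm{const}_R - (p_H H - L)\alpha \sum_{j\in S}a_j$. Calibrating $W=\mathrm{const}_W + p_L L \alpha T$ and $R=\mathrm{const}_R - (p_H H - L)\alpha T$, the first inequality $\sw\ge W$ forces $\sum_{j\in S} a_j \ge T$ and the second $\rev\ge R$ forces $\sum_{j\in S} a_j \le T$, so both hold iff $\sum_{j\in S} a_j = T$, which is precisely the Partition question.

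The main obstacle is reconciling two conflicting requirements on the $v_2^j$ for $j\le n$: they must be small enough relative to $L$ that any mechanism letting bidder $2$ win in some column $j\le n$ is strictly infeasible for $(W,R)$ (such a win propagates by bidder-$2$ monotonicity to all $j'\ge j$ in the same row, and having $v_2^j \ll L$ ensures the resulting welfare and revenue are negligible compared to the (b)/(c) baseline); but they must also have gaps large enough to enforce regularity, which the recursion $v_2^{j+1}-v_2^j \ge T_{j+1}/f_2^{j+1} - T_j/f_2^j$ (with $T_j=\sum_{k>j}f_2^k$) shows is satisfiable with $v_2^n$ bounded by a polynomial in $n$ and $\max_i a_i$. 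Picking $L$ polynomially larger than $v_2^n$, and $H$ so that $L<p_H H$, reconciles the two requirements and keeps every numeric parameter of polynomial bit-length, so the reduction runs in polynomial time and the hardness claim follows.
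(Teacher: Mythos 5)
Your overall strategy --- make the welfare increase and the revenue decrease linearly in an encoded subset sum, then pinch with the two thresholds so that feasibility forces the sum to equal exactly $T$ --- is a legitimately different (and more transparent) decoupling idea than the paper's, which instead keeps $\sw+\rev$ constant over a carefully engineered class of mechanisms on supports of size $2k+1$. However, there is a genuine gap in your construction: bidder $2$'s distribution is not regular under the standard discrete notion of regularity, which is the notion the theorem is about and the one the paper itself uses, namely $\phi^j = v^j - (v^{j+1}-v^j)\frac{f^{j+1}+\cdots+f^{h}}{f^j}$ nondecreasing. Your regularity check uses the continuous-style formula $v^j-(1-F(v^j))/f^j$, which omits the gap factor $(v^{j+1}-v^j)$, and this omission is exactly where the construction breaks. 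With the correct formula, the enormous top value $v_2^{n+1}$ carrying non-negligible mass makes $\phi_2^n \approx -\,(v_2^{n+1})\,f_2^{n+1}/f_2^n$ hugely negative, and the chain $\phi_2^j\le\phi_2^n$ then forces every lower gap to satisfy $(v_2^{j+1}-v_2^j)\,T_j/f_2^j \gtrsim v_2^{n+1} f_2^{n+1}/f_2^n$; summing these with $f_2^j=\alpha a_j$ yields $v_2^n \gtrsim v_2^{n+1} f_2^{n+1}\,\bigl(\sum_{j<n}a_j\bigr)/a_n$ (up to constants), which contradicts the scale separation $v_2^j \ll L < H \ll v_2^{n+1}$ that your anchor and ``negligible small columns'' arguments require. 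Shrinking $f_2^{n+1}$ does not help: regularity then caps the anchor product $v_2^{n+1}f_2^{n+1}$ at roughly $v_2^n a_n/(2T)$, while your penalty argument needs it to exceed quantities of order $L\alpha T \gtrsim T\,v_2^n$. So as written you have (at best) hardness for general independent distributions, not for regular ones, and the fix is not a matter of tuning constants --- the paper avoids this by giving both bidders interleaved supports whose consecutive values differ by $O(a_i)$, with regularity verified against the discrete formula.

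A secondary, repairable issue: eliminating per-column shapes because they are ``dominated'' is not by itself a valid step in a threshold-feasibility reduction (a dominated mechanism can still meet both thresholds). What is actually needed --- and what your arithmetic does deliver for shape (a), since $\sw\ge W$ and $\rev\ge R$ jointly force the set of empty columns to vanish, and what your anchor-penalty sketch delivers for columns where bidder $2$ wins early --- is that every mechanism outside the restricted class misses at least one threshold by more than the maximum attainable slack (which also requires choosing $L \gg T\cdot v_2^n$ so that the $O(v_2^j f_2^j)$ perturbations from mixed shapes cannot shift the subset-sum accounting). Those completeness/soundness details should be spelled out, but the real obstruction to the claimed theorem is the regularity failure above.
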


\begin{proof}[Sketch]
Due to space constraints, we show here the reduction for the exact problem for the welfare objective; quite simple and intuitive, it also captures the main idea in the (significantly more elaborate) proof for the bi-criterion problem (given in Appendix~\ref{app:2p-hard}).

The reduction is from the Partition problem: we are given a set $B = \{b_1, \ldots, b_k \}$ of $k$ positive integers, 
and we wish to determine whether it is possible to partition $B$ into two subsets with equal sum.
We assume that $b_i \geq b_{i+1}$ for all $i$. Consider the rescaled values $b'_i := b_i / (10k \cdot T)$, 
where   $T = \littlesum_{i=1}^k b_i$, and the set $B' =  \{b'_1, \ldots, b'_k \}$. 
It is clear that there exists a partition of $B$ iff there exists a partition of $B'$.

We construct an instance of the auction problem with two bidders whose independent valuations 
$v_r$ (row bidder) and $v_c$ (column bidder) are uniformly distributed over supports of size $k$. 
(To avoid unnecessary clutter in the expressions, we assume w.l.o.g -- by linearity -- that the ``probability mass'' of all elements in the support is equal to $1$, 
as opposed to $1/k$.) The valuation distribution for the row bidder is 
supported on the set $\{1, 2, \ldots, k\}$, while the column bidder's valuation comes from the set 
$\{1+b'_1, 2+b'_2, \ldots,  k+b'_k\}.$ Since $b'_i \geq b'_{i+1}$ and $ \littlesum_{i=1}^k b'_i = 1/(10k)$,
it is straightforward to verify that both distributions are indeed regular (see Appendix~\ref{app:reg}).

The main idea of the proof is this: appropriately {\em isolate} a subset of $2^k$ feasible mechanisms whose welfare values
encode the sum of values $\littlesum_{i \in S} b'_i$ for all possible subsets $S \subseteq [k]$. 
The existence of a mechanism with a specified welfare value would then reveal the existence of a partition.
Formally, we prove that there exists a Partition of $B'$ iff there exists a feasible mechanism $M^{\ast}$ with (expected) welfare
\vspace{-0.2cm}
\begin{equation} \label{eqn:goal}
\sw(M^{\ast}) = (2/3)\cdot(k-1)k(k+1)+ (1/2) \cdot k(k+1) + \littlesum_{i=2}^k (i-1)b'_i + 1/(20k)
\end{equation}

Consider the allocation matrix of a feasible mechanism. Recall that a mechanism is feasible iff its allocation matrix satisfies
the monotonicity constraint. The main claim is that {\em all mechanisms that could potentially satisfy~(\ref{eqn:goal}) 
must allocate the item to the highest bidder, except potentially for the outcomes $(v_r = i, v_c = i+b'_i)$ 
(i.e. the ones corresponding to entries on the secondary diagonal of the matrix) when the item can be allocated to either bidder.}
Denote by $\mathcal{R}$ the aforementioned subclass of mechanisms. The above claim follows from the next lemma, which 
shows that mechanisms in $\mathcal{R}$ maximize welfare (see Appendix~\ref{app:r} for the proof):
\vspace{-0.1cm}
\begin{lemma} \label{lem:r}
We have $ \max_{M \notin \mathcal{R}} {\sw(M)} < \min_{M \in \mathcal{R}} {\sw(M)} < \sw(M^{\ast})$.
\end{lemma}
\vspace{-0.1cm}
To complete the proof,
observe that all $2^k$ mechanisms in $\mathcal{R}$ satisfy monotonicity, hence are feasible. Also note that there is a natural bijection between subsets $S \subseteq [k]$  
and these mechanisms: we include $i$ in $S$ iff on input $(v_r = i, v_c = i+b'_i)$ the item is allocated to the column bidder. Denote by $M(S)$ the mechanism in 
$\mathcal{R}$ corresponding to subset $S$ under this mapping; we will compute the welfare of $M(S)$. 
Note that the contribution of each entry of the allocation matrix (input) to the welfare equals the valuation of the bidder who gets the item for that input.
By the definition of $\mathcal{R}$, for the entries above the secondary diagonal,  the row bidder gets the item (since her valuation
is strictly larger than that of the column bidder -- this is evident since $\max_i b'_i < 1/(10k)$). 
Therefore, the contribution of these entries to the welfare equals $\littlesum_{i=2}^k i(i-1) = (1/3)(k-1)k(k+1)$.
Similarly,  for the entries below the diagonal,  the column bidder gets the item and their contribution to the welfare is 
$\littlesum_{i=2}^k (i+b'_i)(i-1) = (1/3)(k-1)k(k+1)+ \littlesum_{i=2}^k (i-1)b'_i.$ Finally, for the diagonal entries, if $S \subseteq [k]$ is the subset of indices
for which the column bidder gets the item, the welfare contribution is $\littlesum_{i \in S} (i+b'_i) + \littlesum_{i \in [k] \setminus S} i = k(k+1)/2 + \littlesum_{i \in S} b'_i$.
Hence, we have:
\vspace{-0.2cm}
\begin{equation} \label{eqn:have}
\sw(M(S)) = (2/3)\cdot (k-1)k(k+1)+ (1/2)\cdot k(k+1) + \littlesum_{i=2}^k (i-1)b'_i +  \littlesum_{i \in S} b'_i
\end{equation}

Recalling that $\littlesum_{i=1}^k b'_i = 1/(10k)$, (\ref{eqn:goal}) and (\ref{eqn:have}) imply that there exists a partition of $B'$ iff there exists a feasible
mechanism satisfying  (\ref{eqn:goal}). This completes the proof sketch. 
(See Appendix~\ref{app:2p-bic} for the much more elaborate proof of the general case.) \qed
\end{proof}

We can also prove that the size of the Pareto curve can be exponentially large (in other words, the problem of computing the entire curve is exponential even if $P = NP$). 
The construction is given in Appendix~\ref{app:exp}.

\begin{theorem} \label{thm:exp}
There exists a family of two-bidder instances for which the size of the Pareto curve for {\sc Bi-Criterion Auction} grows exponentially.
\end{theorem}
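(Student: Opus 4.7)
The plan is to reuse the two-bidder ``diagonal gadget'' from the proof of Theorem~\ref{thm:2p-hard}, but with carefully tuned perturbations so as to produce exponentially many mutually non-dominated mechanisms. For each $k$, I would take independent uniform distributions on supports of size $k$: the row bidder draws from $\{1,2,\ldots,k\}$ and the column bidder from $\{1+\delta_1,2+\delta_2,\ldots,k+\delta_k\}$, with $\delta_i = c\cdot \beta^{\,i}$ for a sufficiently large constant $\beta=\beta(k)$ and $c$ small enough to guarantee $\delta_i<1/k$ for all $i$.

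First, I would invoke an exact analogue of Lemma~\ref{lem:r}: every Pareto-optimal mechanism must allocate to the strictly higher bidder on each off-diagonal cell, because any deviation strictly decreases welfare and, through threshold pricing, weakly decreases revenue as well. This confines attention to the family $\{M(S):S\subseteq[k]\}$, where $M(S)$ gives the column bidder exactly the diagonal cells indexed by $S$. The secondary diagonal is an antichain in the monotonicity partial order, so every such $M(S)$ is a feasible mechanism.

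Second, I would compute both objectives as additive functions of $S$. A careful threshold bookkeeping, in which the only prices that move under a single diagonal flip are the row bidder's at column-value $i+\delta_i$ and the column bidder's at row-value $i$, shows that flipping cell $(i,i)$ from row-wins to column-wins changes welfare by exactly $+\delta_i$ and changes revenue by exactly $(k-i+1)\delta_i-(k-i)\delta_{i+1}$ for $i<k$, with a separate positive contribution at $i=k$. For $\beta$ large the $i<k$ revenue change is strictly negative, so Pareto-optimality forces $k\in S$; writing $T:=S\cap[k-1]$ we obtain $\sw(M(T\cup\{k\}))=W_0+\sum_{i\in T}\delta_i$ and $\rev(M(T\cup\{k\}))=R_0-\sum_{i\in T}r_i$ with positive constants $r_i=\delta_i\bigl[(k-i)\beta-(k-i+1)\bigr]$.

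Third, I would pick $\beta$ large enough that both $(\delta_i)_{i=1}^{k-1}$ and $(r_i)_{i=1}^{k-1}$ are super-increasing, i.e. each term strictly exceeds the sum of its predecessors. For any two distinct $T,T'\subseteq[k-1]$, the comparison of $\sum_{i\in T}\delta_i$ with $\sum_{i\in T'}\delta_i$ and of $\sum_{i\in T}r_i$ with $\sum_{i\in T'}r_i$ are then both decided by the same criterion --- whether the largest element of $T\triangle T'$ lies in $T$ or in $T'$ --- so welfare and revenue are strictly oppositely ordered across the family. Consequently all $2^{k-1}$ points $\{(\sw(M(T\cup\{k\})),\rev(M(T\cup\{k\}))):T\subseteq[k-1]\}$ are mutually Pareto-incomparable, which establishes the exponential lower bound. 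The main obstacle is the revenue-change calculation under a diagonal flip: four threshold contributions must combine and partially cancel to yield the clean additive formula above; once that formula is in hand, the rest is an elementary super-increasing subset-sum argument.
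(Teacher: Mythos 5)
Your construction is essentially the paper's: perturb one bidder's support by rapidly growing amounts $\delta_i$, fix the highest-bidder allocation off the secondary diagonal, and tune the $\delta_i$ so that over the $2^{k}$ diagonal assignments welfare and revenue are ordered in exactly opposite ways (the paper uses $a_i=3^{i-1}$, suitably normalized, and phrases the opposite ordering via its Claims 1 and 2; your explicit per-flip revenue formula $(k-i+1)\delta_i-(k-i)\delta_{i+1}$ is correct and is, if anything, a cleaner way to get the same modularity and the super-increasing/antichain argument). Settling for the $2^{k-1}$ sets containing $k$ is harmless for an exponential bound.

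However, your first step is wrong as stated, and it is the step that is supposed to rule out domination by mechanisms \emph{outside} the family. It is not true that ``any deviation \ldots through threshold pricing, weakly decreases revenue'': withholding the item or allocating to the lower bidder on an off-diagonal cell raises the winner's threshold on an entire row or column and can strictly \emph{increase} revenue (this is exactly the reserve-price effect; Myerson's auction for these near-uniform distributions is Pareto optimal, withholds the item at low values, and lies outside your family). Consequently the claim ``every Pareto-optimal mechanism allocates to the strictly higher bidder off the diagonal'' is false. Fortunately you do not need it: to conclude that each $M(T\cup\{k\})$ is Pareto optimal it suffices that no mechanism outside the family dominates it, and for that the welfare half of your argument alone is enough, in the spirit of Lemma~\ref{lem:r} --- any off-diagonal deviation (or non-allocation on the diagonal) loses at least $1-\max_i\delta_i\geq 1-1/k$ in welfare, whereas the welfare spread within the family is only $\sum_i\delta_i=O(1/k)$ thanks to the geometric decay, so every outside mechanism has strictly smaller welfare than every family member and hence cannot dominate any of them. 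With step 1 restated as this welfare-separation claim (which the paper records as ``this subset maximizes welfare over all feasible mechanisms''), your proof goes through.
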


\vspace{-0.3cm}

\section{An FPTAS for 2 bidders} \label{sec:fptas}
\vspace{-0.1cm}
In this section we give our main algorithmic result:
\vspace{-0.1cm}
\begin{theorem}\label{thm:fptas}
For two bidders, there is an FPTAS to approximate the Pareto curve of the {\sc Bi-Criterion Auction} problem, even for arbitrarily correlated distributions.
\end{theorem}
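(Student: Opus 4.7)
The plan is to apply Theorem~\ref{thm:PY2000} to the Bi-Criterion Auction problem, which reduces the task to two ingredients: (i) rewriting welfare and revenue as linear functions with \emph{non-negative} coefficients in a suitable encoding of the mechanism (the content of Lemma~\ref{lem:discrete-case} alluded to in the excerpt), and (ii) a pseudo-polynomial algorithm for the exact version of the problem, i.e., deciding for a target $C$ whether some feasible mechanism attains welfare (resp.\ revenue) \emph{exactly} $C$.

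For (i), the natural Myerson-virtual-value representation fails, since virtual values can be negative. My plan is to re-parameterize the mechanism in ``threshold-start'' indicator variables: for each column $j$, indicators marking the row bidder's threshold in column $j$, and symmetrically for each row $i$, indicators marking the column bidder's threshold in row $i$. Welfare is then immediately a non-negative linear functional, since valuations and probabilities are non-negative. For revenue, I would decompose each threshold payment via the telescoping identity $v^k = \sum_{l=1}^{k}(v^l - v^{l-1})$ (with $v^0 := 0$), and rearrange the resulting double sums so that each threshold indicator is weighted by a non-negative coefficient built from (non-negative) valuation differences and probability tails.

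For (ii), I would design a pseudo-polynomial dynamic program that exploits the planar 2D structure of the allocation matrix. A monotone allocation partitions the $h_r \times h_c$ grid into three regions --- the row-bidder region (an upper set in the row axis), the column-bidder region (an upper set in the column axis), and the unallocated region --- with the first two disjoint. I would sweep the grid in a suitable order (for concreteness, column by column), maintaining a DP table indexed by the current column $j$, the accumulated objective value, and a compact frontier state that records the constraints imposed on future cells by earlier decisions (e.g., the largest row index already claimed by the column bidder, which restricts the feasible row-bidder thresholds in subsequent columns). Under the non-negative encoding of (i), each indicator's contribution is local: it is added to the objective when the indicator is set, with no cross-dependence once the frontier is updated; in particular, a row committed to the column bidder at column $b$ contributes the precomputable tail $\sum_{j' \ge b} f(i,j')\, v_c^{j'}$ at the moment of commitment. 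Because the rescaled objective values lie in a pseudo-polynomial range, the table has polynomial size and the exact problem is solved in pseudo-polynomial time.

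The hardest step is ingredient (i): producing a non-negative-coefficient linear encoding of revenue is delicate, since thresholds couple allocations globally and any naive rewriting leaks negative coefficients. A second obstacle is that, as noted after Theorem~\ref{thm:PY2000}, several of its assumptions are not met by Bi-Criterion Auction verbatim; so rather than a black-box invocation, I would perform a custom reduction by feeding the pseudo-polynomial exact DP directly into the GAP Problem characterization of FPTASes for approximate Pareto sets: on input $(W_0, R_0, \delta)$, discretize the instance, call the DP to either produce a mechanism that dominates $(W_0, R_0)$ or certify that no mechanism beats this vector by a multiplicative factor of $(1+\delta)$ in both coordinates, and output the answer. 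Correlated distributions require no new ideas, since the DP manipulates the joint probability mass function $f(i,j)$ directly and never invokes independence.
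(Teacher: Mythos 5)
Your overall skeleton matches the paper's: re-encode a mechanism by the per-column/per-row \emph{threshold} (single-bidder pricing) indicators so that both objectives become linear with non-negative coefficients (this is exactly Lemma~\ref{lem:discrete-case}; note the telescoping detour is unnecessary, since the revenue coefficient of a threshold indicator is just the offered price times a tail probability, $r_1^{i,j}=\sum_{l\ge i} v_1^{i} f(l,j)\ge 0$), then solve the exact version pseudo-polynomially and plug into Theorem~\ref{thm:PY2000}/the GAP characterization. So ingredient (i) and the final reduction are essentially the paper's.

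The genuine gap is in ingredient (ii), and it sits exactly where the paper's key idea lives. Your column-by-column sweep with frontier state ``the largest row index already claimed by the column bidder'' is not a sufficient state. The column bidder's claimed rows need not be contiguous and need not be claimed in increasing row order: e.g.\ she may win row $3$ from column $1$ on and later also row $1$ from column $2$ on, which is perfectly monotone (her constraint is only within each row, the row bidder's only within each column). With only the maximum claimed row in the state, the DP either forbids claiming a row below the current maximum (losing completeness: achievable exact welfare/revenue values are missed, which invalidates the exact-version oracle), or allows it without remembering which rows are already claimed, in which case a row's precomputed tail contribution can be added twice and the DP certifies values no feasible mechanism attains (losing soundness). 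Tracking the full claimed set is exponential, and cheaper surrogates (such as the total claimed probability mass) do not suffice in the correlated case that the theorem explicitly covers, since the per-column contribution of the claimed rows depends on the joint masses $f(i,j)$ row by row. The paper avoids this entirely with a different decomposition (Lemma~\ref{lem:recurrence}): the DP state is a suffix rectangle $[i..h]\times[j..h]$ of the two supports plus the exact target, and one peels off the lowest row and/or the lowest column; monotonicity rules out the configuration in which bidder 1 wins somewhere in the lowest row while bidder 2 wins somewhere in the lowest column (the two forced monotone regions would collide), so at each step the peeled row/column carries a single pricing whose contribution is exactly one coefficient $w^{\cdot,\cdot}_{\cdot}$ or $r^{\cdot,\cdot}_{\cdot}$. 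That structural observation, or some substitute for it, is what your sketch is missing; without it the proposed DP does not correctly solve the exact problem.
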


In the proof, we design a pseudo-polynomial algorithm for the exact version of the problem 
(for both the welfare and revenue objectives) and then appeal to Theorem~\ref{thm:PY2000}.  There is a difficulty, however, in showing that the problem satisfies the assumptions of Theorem~\ref{thm:PY2000}, because in the most natural linear representation of the problem, the coefficients for revenue, coinciding with the virtual valuations, may be negative, thus violating the hypothesis of Theorem~\ref{thm:PY2000}.

We use the following alternate representation: Instead of considering the contribution of each entry (bid tuple) of the allocation matrix separately, we consider the revenue and welfare resulting from all the {\em single-bidder mechanisms} (pricings) obtained by fixing the valuation of the other bidder. 

\begin{definition}\label{def:coefs}
 Let $r_1^{i_1,i_2}$ and $w_1^{i_1,i_2}$ be the (contribution to the) revenue and welfare from bidder 1 of the pricing which offers bidder 1 a price of $v_1^{i_1}$ when bidder 2's value is $v_2^{i_2}$:  $r_1^{i_1,i_2}= \sum_{j\geq i_1}v_1^{i_1}\cdot f(j,i_2)$ and $w_1^{i_1,i_2}=\sum_{j\geq i_1}v_1^j\cdot f(j,i_2)$, where $f(\cdot,\cdot)$ is the joint (possibly non-product) valuation distribution. (The quantities $r_2^{i_1,i_2}$ and $w_2^{i_1,i_2}$ are defined analogously.)
\end{definition}

\begin{lemma}\label{lem:discrete-case}
The {\sc Bi-Criterion Auction} problem can be expressed in a way that satisfies the conditions of Theorem~\ref{thm:PY2000}.
\end{lemma}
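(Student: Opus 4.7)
The plan is to replace the natural cell-indicator representation of the allocation matrix, whose revenue coefficients are Myerson's virtual valuations and can be negative, with a ``pricing'' representation in which the mechanism is decomposed into one single-bidder pricing per column (for bidder~1) and one per row (for bidder~2). By the IC/IR characterization recalled in Section~\ref{sec:prelims}, a deterministic two-bidder auction is completely specified by thresholds $t_1(i_2) \in \{1,\ldots,h_1\}\cup\{\infty\}$ for each value $v_2^{i_2}$ and symmetric thresholds $t_2(i_1)$ for each value $v_1^{i_1}$, subject only to the consistency requirement that the column-wise upward-closed ``1''-region and the row-wise upward-closed ``2''-region they induce in the allocation matrix be disjoint.

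Under this parameterization the two objectives decompose cleanly as
\[
\sw \;=\; \sum_{i_2} w_1^{t_1(i_2),\,i_2} \;+\; \sum_{i_1} w_2^{i_1,\,t_2(i_1)}, \qquad
\rev \;=\; \sum_{i_2} r_1^{t_1(i_2),\,i_2} \;+\; \sum_{i_1} r_2^{i_1,\,t_2(i_1)},
\]
where the coefficients are those of Definition~\ref{def:coefs} (extended by $w_j^{\infty,\cdot}=r_j^{\infty,\cdot}=0$ to model the ``do not sell'' option). Every such coefficient is a sum of products of non-negative valuations and non-negative probabilities, hence non-negative. I would then linearize by introducing $0/1$ indicators $x_1^{i_1,i_2}=\mathbf{1}[t_1(i_2)=i_1]$ and $x_2^{i_1,i_2}=\mathbf{1}[t_2(i_1)=i_2]$, so that $\sw$ and $\rev$ become linear forms in the $x$'s with exactly these non-negative coefficients.

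Feasibility is captured by three families of linear constraints on the indicators: (i) $\sum_{i_1} x_1^{i_1,i_2}=1$ for every $i_2$ and the symmetric equation for $x_2$, encoding that each column (row) has a unique threshold; and (ii) a cell-wise consistency inequality $\sum_{j\leq i_1} x_1^{j,i_2} + \sum_{k\leq i_2} x_2^{i_1,k} \leq 1$ for every $(i_1,i_2)$, which forbids assigning the same cell to both bidders. What remains is essentially a bookkeeping verification: that every $0/1$ feasible solution of this system corresponds to a valid deterministic IC/IR mechanism and vice versa, and that the two expressions above then match the original definitions of welfare and revenue. The main subtlety I expect is handling the $\infty$ thresholds uniformly (so that the degenerate ``do not sell'' option is neither double-counted nor forbidden by the consistency inequalities); once this correspondence is in place, the non-negativity of all coefficients is immediate and Theorem~\ref{thm:PY2000} applies.
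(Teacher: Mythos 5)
Your proposal is correct and is essentially the paper's own argument: your indicators $x_1^{i_1,i_2}=\mathbf{1}[t_1(i_2)=i_1]$ and $x_2^{i_1,i_2}=\mathbf{1}[t_2(i_1)=i_2]$ are exactly the paper's variables $x_{ij},y_{ij}$ (marking, for each fixed value of the other bidder, the lowest winning value, i.e.\ the single-bidder pricing), and the objectives are written as linear forms in these indicators with the non-negative coefficients of Definition~\ref{def:coefs}. Your extra explicitness about the ``do not sell'' option and the linear feasibility constraints only spells out what the paper leaves as ``easy to see.''
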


\begin{proof}
We consider variables $x_{ij},\, y_{ij}$, $i \in [h_1]$, $j \in [h_2]$. The $x_{ij}$'s are defined as follows:
 $x_{ij}=1$ iff $A[i,j]=1$ and $A[i',j]\ne 1$ for all $i'<i$. 
I.e.  $x_{ij}=1$ iff the $(i,j)$-th entry of $A$ is allocated to bidder 1 and, for this fixed value of $j$, $i$ is the smallest index for which bidder 1 gets allocated; 
symmetrically, $y_{ij}=1$ iff $A[i,j]=2$ and $A[i,j']\neq 2$ for all $j'<j$.
It is easy to see that the feasibility constraints are linear in these variables. 
 We can also express the objectives as linear functions with non-negative coefficients as follows:
 \vspace{-0.3cm}
\begin{eqnarray*}
\rev(x,y)&=&\littlesum_{i=1}^{h_1}\littlesum_{j=1}^{h_2}x_{ij} r_1^{i,j}+\littlesum_{i=1}^{h_1}\littlesum_{j=1}^{h_2}y_{ij} r_2^{i,j}\\
\sw(x,y)&=&\littlesum_{i=1}^{h_1}\littlesum_{j=1}^{h_2}x_{ij} w_1^{i,j}+\littlesum_{i=1}^{h_1}\littlesum_{j=1}^{h_2}y_{ij} w_2^{i,j}
\end{eqnarray*}\qed
\end{proof}

\subsection{An algorithm for the exact version of {\sc Bi-Criterion Auction}} \label{ssec:DP}
The main idea behind our algorithm, inspired by the characterization of Lemma~\ref{lem:discrete-case}, is to consider the contribution from each bidder (fixing the value of the other) independently, by going over all (linearly many) single-bidder mechanisms for both bidders. The challenging part is to combine the individual single-bidder mechanisms into a single two-bidder mechanism and to this end we employ dynamic programming:

Assume that both bidders have valuations of support size $h$; the subproblems we consider in our
dynamic program correspond to settings where we condition that the valuation of each bidder is drawn from an upwards closed subset of his original support. Formally, let $M[i,j,W]$ be True iff there exists an auction that uses the valuations $(v_1^i,\ldots,v_1^h)$ and $(v_2^j,\ldots,v_2^h)$
and has welfare exactly $W$. In what follows $N_{i,j}$ is the normalization factor for valuations (jointly) drawn from $(v_1^i,\ldots,v_1^h)$ and $(v_2^j,\ldots,v_2^h)$, namely $N_{i,j} = \sum_{k\geq i, l\geq j}f(v_1^k,v_2^l)$.

\begin{lemma}\label{lem:recurrence}
We can update the quantity $M[i,j,W]$ as follows:
\begin{eqnarray*}
M[i,j,W]=&&\bigvee_{k\geq j}M[i+1,j,\left(W\cdot N_{i,j}-w_2^{i,k}\right)\cdot N_{i+1,j}^{-1}]\\
&\lor& \bigvee_{k\geq i}M[i,j+1,\left(W\cdot N_{i,j}-w_1^{k,j}\right)\cdot N_{i,j+1}^{-1}]\\
&\lor& \bigvee_{\substack{k> i\\ l> j}}M[i+1,j+1,\left(W\cdot N_{i,j}-w_1^{k,j}-w_2^{i,l}\right)\cdot N_{i+1,j+1}^{-1}]
\end{eqnarray*}
\end{lemma}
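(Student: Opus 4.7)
The plan is to prove the recurrence by a bidirectional argument: soundness (each disjunct implies $M[i,j,W]$) and completeness (any witness for $M[i,j,W]$ is captured by at least one disjunct). Throughout, I would use that $W$ is the expected welfare conditional on values lying in $\{v_1^i,\ldots,v_1^h\}\times\{v_2^j,\ldots,v_2^h\}$, so the corresponding unnormalized sum of $v\cdot f$ terms is $W\cdot N_{i,j}$; this is what makes the rescalings by $N_{i+1,j}^{-1}$, $N_{i,j+1}^{-1}$, $N_{i+1,j+1}^{-1}$ in the three disjuncts exactly right.

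For soundness I would interpret the disjuncts as three ways of peeling a valid mechanism. In the first disjunct row $i$ is assumed to contain no bidder-$1$ wins, so by bidder $2$'s monotonicity it must be a prefix of zeros followed by a suffix of twos starting at some column $k\geq j$ (with the natural convention $k=h+1$ meaning bidder $2$ never wins in row $i$, so $w_2^{i,h+1}=0$). Its contribution to the unnormalized welfare is then exactly $w_2^{i,k}$, leaving the remainder to be accounted for by a valid mechanism on rows $[i+1,h]$ and columns $[j,h]$. The second disjunct is the column-symmetric version. The third disjunct strips both row $i$ and column $j$ under the structural assumption that row $i$ has bidder-$2$ threshold $v_2^l$ with $l>j$ and column $j$ has bidder-$1$ threshold $v_1^k$ with $k>i$; the two stripped parts intersect only at the cell $(i,j)$, which both structures declare to be $0$, so their contributions add without double-counting to $w_1^{k,j}+w_2^{i,l}$. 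In each case I would glue the structured row/column to a valid subproblem mechanism and verify monotonicity cell-by-cell along the seam: the only new constraints are bidder $1$'s column-monotonicity and bidder $2$'s row-monotonicity, and these are vacuous precisely where we have removed the corresponding bidder's wins.

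For completeness I would start with an arbitrary valid mechanism $\mathcal{A}$ on $[i,h]\times[j,h]$ with normalized welfare $W$ and argue it satisfies at least one disjunct. The key structural observation is that monotonicity forbids having both a bidder-$1$ win in row $i$ and a bidder-$2$ win in column $j$: if $\mathcal{A}[i,j_0]=1$ for some $j_0\geq j$ and $\mathcal{A}[i_0,j]=2$ for some $i_0\geq i$, then bidder $1$'s monotonicity propagates the $1$ down column $j_0$ to force $\mathcal{A}[i_0,j_0]=1$, while bidder $2$'s monotonicity propagates the $2$ across row $i_0$ to force $\mathcal{A}[i_0,j_0]=2$, a contradiction. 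Hence either row $i$ is free of bidder-$1$ wins or column $j$ is free of bidder-$2$ wins (or both), activating the first or second disjunct with $k$ read off as the actual threshold appearing in $\mathcal{A}$, and the restriction of $\mathcal{A}$ to the appropriate subgrid serving as the subproblem witness. When both conditions hold and additionally $\mathcal{A}[i,j]=0$, the third disjunct also applies, allowing the recursion to shrink both coordinates at once.

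The main obstacle will be the boundary bookkeeping: permitting $k,l$ to equal $h+1$ so that empty rows/columns are encoded uniformly, ensuring the corner cell $(i,j)$ is never double-counted, converting correctly between normalized and unnormalized welfare via the $N$ factors, and verifying that the ``glued'' mechanism is monotone both along the stripped row/column and at the interface with the subproblem. These are routine but easy to get wrong, and I would discharge them by an explicit case analysis on the adjacent pairs of cells along the seam.
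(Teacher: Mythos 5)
Your proof is correct and takes essentially the same route as the paper's: the dichotomy you establish (monotonicity forbids a bidder-1 win in row $i$ coexisting with a bidder-2 win in column $j$) is exactly the paper's exclusion of its form F1, and your three peeling cases correspond to its forms F2--F4 matched to the three disjuncts, with the same normalization bookkeeping via $N_{i,j}$. You merely make explicit the soundness (gluing) direction and the $k=h+1$ convention for empty rows/columns, details the paper leaves implicit.
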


\begin{proof}
Let $A[i\ldots h,j\ldots h]$ be the allocation matrix of the auction that results from the above update rule, fixing $i$ and $j$. We start by noting that any allocation matrix $A$ can have one of the following four forms:

\begin{itemize}
\item[{\bf F1: }] There exist $i'$ and $j'$ such that $A[i,j']=1$ and $A[i',j]=2$.
\item[{\bf F2: }] There exists $i'$ such that $A[i',j]=2$ but there is no $j'$ such that $A[i,j']=1$.
\item[{\bf F3: }] There exists $j'$ such that $A[i,j']=1$ but there is no $i'$ such that $A[i',j]=2$.
\item[{\bf F4: }] There exist no $i'$ and $j'$ such that $A[i,j']=1$ or $A[i',j]=2$.
\end{itemize}

Because of monotonicity it follows immediately that no allocation matrix of form F1 can be valid, and the other three forms correspond to the three terms of the recurrence; finally note that for any such form, say F2, the first term of the update rule for $M[i,j,W]$ runs over all possible pricings for bidder 1 (keeping the value of bidder 2 at $v_2^j$) and checks whether they induce the required welfare. \qed
\end{proof}

We omit the straightforwards details of how the above recurrence can be efficiently implemented as a pseudo-polynomial dynamic programming algorithm.
The algorithm for deciding whether there exists an auction with revenue exactly $R$ is identical to the above by simply replacing $R$ (the revenue target value) for $W$ and $r_j^{i_1,i_2}$ for $w_j^{i_1,i_2}$. 

\section{The case of  $n$ bidders} \label{sec:many-bidders}
When the number $n$ of bidders is part of the input, the allocation matrix is no longer a polynomially succinct representation of a mechanism. In fact, it is by no means clear whether  {\sc Bi-Criterion Auction} is even in $NP$ in this case: we next show that for the case of $n$ binary bidders, the problem is $NP$-hard and in $PSPACE$:

\begin{theorem}\label{thm:many}
For $n$ binary-valued bidders {\sc Bi-Criterion Auction} is (weakly) NP-hard and in PSPACE.
\end{theorem}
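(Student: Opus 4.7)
I would split the claim into the upper and lower bounds.

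\emph{PSPACE upper bound.} The natural representation of an $n$-binary-bidder instance lists the joint distribution on the $2^n$ valuation profiles (independent product distributions being a special case), so the input has size $\Theta(2^n)$, and a deterministic IC/IR mechanism can be recorded as an allocation matrix $A:\{L_i,H_i\}_{i\in[n]}\to\{0,1,\ldots,n\}$ in $O(n\cdot 2^n)$ bits --- polynomial in the input. The algorithm steps a counter through all $(n+1)^{2^n}$ candidate matrices; for each one it verifies the (one-sided) monotonicity constraint by scanning the $O(n\cdot 2^n)$ pairs of comparable profiles, and it evaluates $\sw(A)$ and $\rev(A)$ in a single sweep over the $2^n$ profiles, reconstructing each winner's payment as the smallest value at which that bidder still wins inside $A$. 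Every counter, loop index, and accumulator remains polynomial in the input, so the procedure runs in polynomial space.

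\emph{NP-hardness.} I would reduce from Partition, extending the secondary-diagonal trick behind Theorem~\ref{thm:2p-hard}. Given positive integers $b_1,\ldots,b_k$ with $\sum_i b_i = 2T$, I would construct $n=O(k)$ binary-valued bidders whose low and high values and probabilities are tuned so that: (a) on every profile outside a distinguished family of $k$ ``free'' profiles, monotonicity together with strict Pareto dominance forces a unique canonical allocation; and (b) on each free profile the two contending allocations are Pareto-incomparable, with welfare difference $+c\cdot b_i$ and revenue difference $-c\cdot b_i$ for a fixed constant $c$ independent of $i$. Under these conditions each subset $S\subseteq[k]$ corresponds bijectively to one feasible mechanism with welfare $W_0+c\sum_{i\in S}b_i$ and revenue $R_0-c\sum_{i\in S}b_i$; setting the targets to $W^{\ast}=W_0+cT$ and $R^{\ast}=R_0-cT$ then makes the bi-criterion question equivalent to the existence of a subset summing to exactly $T$, and the unary encoding of $b_i$ gives a weak NP-hardness reduction.

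The main obstacle is realising (a) and (b) simultaneously. In the two-bidder construction the secondary diagonal consists of $k$ cells pairwise incomparable in the product order, so monotonicity decouples across them for free. On the Boolean cube $\{L,H\}^n$ that structure has to be engineered by hand: the $k$ free profiles must be chosen pairwise incomparable in the cube order (otherwise monotonicity chains the binary choices together), the $L_i,H_i$ must be picked so that on every other profile one bidder strictly dominates in both welfare and revenue (pinning the allocation there), and the probabilities must be set so that each free trade-off equals exactly $\pm c\cdot b_i$ with no cross-interference. Plausibly one pads the construction with a few ``buffer'' bidders whose high values exceed all $H_i$ in order to sterilise the non-free profiles and anchor the canonical allocation there.
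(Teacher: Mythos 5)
There are two genuine gaps here, one in each half of your argument. For the PSPACE upper bound, you have assumed the wrong input encoding: in this setting the instance consists of $n$ independent binary-valued distributions, i.e.\ $O(n)$ numbers, so the input has size polynomial in $n$, \emph{not} $\Theta(2^n)$. This is precisely the point the paper makes at the start of Section~\ref{sec:many-bidders}: the allocation matrix (with $2^n$ entries) is no longer a polynomially succinct representation of a mechanism, and it is not even clear the problem is in NP. Under your reading the problem would trivially be in NP (guess the matrix and check), which defeats the purpose of the theorem. Your enumeration procedure needs $\Omega(2^n\log(n+1))$ bits just to hold the current candidate matrix, hence it is not a polynomial-space algorithm relative to the true input size. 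The paper's route is different and necessary: it encodes deterministic truthful mechanisms as matchings in a bipartite graph with exponentially many nodes that is \emph{succinctly} represented by the instance (Appendix~\ref{app:many}), and then uses the fact that {\sc Exact Matching} is in RNC together with standard results on succinct versions of problems to place the exact (and hence the bi-criterion) problem in PSPACE.

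For NP-hardness, your plan reduces from Partition to the bi-criterion trade-off directly, but the construction that would realize your conditions (a) and (b) --- pairwise incomparable ``free'' profiles on the cube, buffer bidders, probabilities tuned so the welfare/revenue trade-offs are exactly $\pm c\cdot b_i$ with no cross-interference --- is exactly the part you leave open, so as it stands this is a proof sketch with the key gadget missing. The paper sidesteps this difficulty by proving hardness of the \emph{exact welfare} version with a much simpler device: bidder $i$ has support $\{l_i,h_i\}$ with $l_i=b_i$ (rescaled so $\sum_i b_i$ is tiny) and the $h_i$ forming a super-increasing sequence, and the target welfare is $\sum_i h_i + \tfrac12\sum_i b_i$. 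Super-increasingness forces any mechanism hitting this value to allocate to bidder $i$ at exactly one profile with $v_i=h_i$, and monotonicity then permits at most one profile with $v_i=l_i$ per bidder, yielding a clean bijection between subsets of $B$ and the relevant mechanisms; the bi-objective statement is then obtained by a (claimed straightforward) generalization. If you want to salvage your approach you would need to actually exhibit the values and probabilities achieving (a) and (b) and verify monotonicity decouples across your free profiles; alternatively, adopting the exact-version strategy avoids the coupling between welfare and revenue that your construction has to fight.
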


\begin{proof}[Sketch]
For simplicity, we prove both results for the exact version of the problem for welfare; the bi-objective case follows by a straightforward but tedious generalization.

The NP-hardness reduction is from Partition. Let $B=\{b_1,\ldots,b_k\}$ be a set of positive rationals; we can assume by rescaling that $\littlesum_{i=1}^k b_i = 1/100$. 
We construct an instance of the auction problem as follows: there are $k$ bidders, with uniform distributions (again we will assume unit masses for simplicity) over the following supports $\{l_i,h_i\}, i=1\ldots n$, where $l_i<h_i$. We set $l_i = b_i$ and demand that $\{h_i\}_{i=1,\ldots, n}$ forms a super-increasing sequence (i.e. $h_{i+1}>\sum_{j=1}^ih_j$), with $h_1 > \max_i b_i$.
The claim is that there exists a partition of $B$ iff there exists an auction with welfare equal to $\littlesum_{i=1}^k h_i +(1/2) \littlesum_{i=1}^k b_i$. To see this notice that -- since the sequence $\{h_i\}_{i=1\ldots n}$ is super-increasing -- any mechanism with the above welfare value must must allocate to bidder $i$ for {\em exactly} one valuation tuple $(v_i,v_{-i})$ where $v_i = h_i$; the corresponding contribution to the welfare from this case is $h_i$. Monotonicity then implies that this auction can allocate to bidder $i$ for {\em at most} one valuation tuple $(v_i,v_{-i})$ where $v_i = l_i$; the corresponding contribution to the welfare from this case is $b_i$. We therefore get a bijection between subsets of $B$ and mechanisms, by including an element $b_i$ in the set $S$ iff bidder $i$ gets allocated the item for some valuation tuple $(v_i,v_{-i})$ where $v_i=l_i$, and the claim follows. 

For the PSPACE upper bound,  we start by noting that the problem of computing an auction with welfare (or revenue) {\em exactly} $W$, can be formulated as the problem of computing a matching of weight exactly $W$ in a particular type of bipartite graphs (first pointed out in~\cite{DBLP:journals/corr/abs-1011-2413}, see Appendix~\ref{app:many}) with a number of nodes that is exponential in the number of bidders. 
The {\sc Exact Matching} problem is known to be solvable in RNC~\cite{DBLP:conf/stoc/MulmuleyVV87}; since our input provides an exponentially succinct representation of the constructed graph, we are interested in the so-called {\em succinct version} of the problem~\cite{DBLP:journals/iandc/GalperinW83,DBLP:journals/iandc/PapadimitriouY86}. By standard techniques, the succinct version of {\sc Exact Matching} in our setting is solvable in PSPACE, and the theorem follows. \qed
\end{proof}

We conjecture the above upper bound to be tight (i.e. the problem is actually PSPACE-complete) even for $n$ bidders with arbitrary supports.

\section{Open Questions} \label{sec:conclusions}

\vspace{-0.2cm}
Is there is an FPTAS for 3 bidders?  We conjecture that there is, and in fact for any constant number of bidders.  Of course, the approach of our FPTAS for 2 bidders cannot be generalized, since it works for the correlated case, which is APX-complete for 3 or more bidders.  We have derived two different dynamic programming-based PTAS's for the uncorrelated problem, but so far, despite a hopeful outlook, we have failed to generalize them to 3 bidders.  Finally, we conjecture that for $n$ bidders the problem is significantly harder, namely PSPACE-complete and inapproximable.

On a different note, it would be interesting to see if we can get better approximations for some special types of distributions; we give one such type of result in Appendix~\ref{app:convex}. Are there improved approximation guarantees for more general kinds of distributions and $n$ bidders? 

\bibliographystyle{abbrv}

\bibliography{refs}

\begin{thebibliography}{10}

\bibitem{DBLP:journals/corr/abs-1005-1121}
V.~Abhishek and B.~E. Hajek.
\newblock Efficiency loss in revenue optimal auctions.
\newblock In {\em CDC}, 2010.

\bibitem{DBLP:conf/sigecom/AggarwalGM09}
G.~Aggarwal, G.~Goel, and A.~Mehta.
\newblock Efficiency of (revenue-)optimal mechanisms.
\newblock In {\em EC}, 2009.

\bibitem{RePEc:bla:restud:v:67:y:2000:i:3:p:455-81}
M.~Armstrong.
\newblock Optimal multi-object auctions.
\newblock {\em Review of Economic Studies}, 67(3):455--81.

\bibitem{RePEc:nbr:nberwo:4608}
J.~Bulow and P.~Klemperer.
\newblock Auctions versus negotiations.
\newblock {\em American Economic Review}, 86(1):180--94, March 1996.

\bibitem{CVZ11}
C.~Chekuri, J.~Vondrak, and R.~Zenklusen.
\newblock Multi-budgeted matchings and matroid intersection via dependent
  rounding.
\newblock In {\em SODA}, 2011.

\bibitem{DDY}
C.~Daskalakis, I.~Diakonikolas, and M.~Yannakakis.
\newblock How good is the chord algorithm?
\newblock In {\em SODA}, 2010.

\bibitem{DP}
C.~Daskalakis and G.~Pierrakos.
\newblock Simple, optimal and efficient auctions.
\newblock In {\em WINE}, 2011.

\bibitem{diakonikolas}
I.~Diakonikolas.
\newblock {\em Approximation of Multiobjective Optimization Problems}.
\newblock PhD thesis, Columbia University, 2010.

\bibitem{DY2}
I.~Diakonikolas and M.~Yannakakis.
\newblock {Succinct Approximate Convex Pareto Curves}.
\newblock In {\em SODA}, 2008.

\bibitem{DY1}
I.~Diakonikolas and M.~Yannakakis.
\newblock Small approximate pareto sets for biobjective shortest paths and
  other problems.
\newblock {\em SIAM J. Comput.}, 39:1340--1371, 2009.

\bibitem{DBLP:conf/stoc/Dobzinski11}
S.~Dobzinski.
\newblock An impossibility result for truthful combinatorial auctions with
  submodular valuations.
\newblock In {\em STOC}, 2011.

\bibitem{DBLP:journals/corr/abs-1011-2413}
S.~Dobzinski, H.~Fu, and R.~D. Kleinberg.
\newblock Optimal auctions with correlated bidders are easy.
\newblock In {\em STOC}, 2011.

\bibitem{DBLP:conf/stoc/DughmiRY11}
S.~Dughmi, T.~Roughgarden, and Q.~Yan.
\newblock From convex optimization to randomized mechanisms: toward optimal
  combinatorial auctions.
\newblock In {\em STOC}, 2011.

\bibitem{Ehr}
M.~Ehrgott.
\newblock {\em Multicriteria optimization}.
\newblock Springer-Verlag, 2005.

\bibitem{DBLP:journals/iandc/GalperinW83}
H.~Galperin and A.~Wigderson.
\newblock Succinct representations of graphs.
\newblock {\em Information and Control}, 56(3):183--198, 1983.

\bibitem{DBLP:conf/soda/GrandoniKLV10}
F.~Grandoni, P.~Krysta, S.~Leonardi, and C.~Ventre.
\newblock Utilitarian mechanism design for multi-objective optimization.
\newblock In {\em SODA}, 2010.

\bibitem{FRS09}
F.~Grandoni, R.~Ravi, and M.~Singh.
\newblock Iterative rounding for multi-objective optimization problems.
\newblock In {\em ESA}, 2009.

\bibitem{Cli}
E.~J.~Climacao.
\newblock {\em Multicriteria Analysis}.
\newblock Springer-Verlag, 1997.

\bibitem{DBLP:conf/sigecom/LikhodedovS04}
A.~Likhodedov and T.~Sandholm.
\newblock Mechanism for optimally trading off revenue and efficiency in
  multi-unit auctions.
\newblock In {\em EC}, 2003.

\bibitem{Mit}
K.~M. Miettinen.
\newblock {\em Nonlinear Multiobjective Optimization}.
\newblock Kluwer, 1999.

\bibitem{DBLP:conf/stoc/MulmuleyVV87}
K.~Mulmuley, U.~V. Vazirani, and V.~V. Vazirani.
\newblock Matching is as easy as matrix inversion.
\newblock In {\em STOC}, 1987.

\bibitem{RePEc:nwu:cmsems:362}
R.~B. Myerson.
\newblock Optimal auction design.
\newblock {\em Mathematics of Operations Research}, 6:58--73.

\bibitem{MS}
R.~B. Myerson and M.~A. Satterthwaite.
\newblock Efficient mechanisms for bilateral trading.
\newblock {\em Journal of Economic Theory}, 29(2):265--281, April 1983.

\bibitem{RePEc:eee:gamebe:v:43:y:2003:i:2:p:214-238}
Z.~Neeman.
\newblock The effectiveness of english auctions.
\newblock {\em Games and Economic Behavior}, 43(2):214--238, May 2003.

\bibitem{Nisan2007}
N.~Nisan, T.~Roughgarden, {\'E}.~Tardos, and V.~V. Vazirani.
\newblock {\em {Algorithmic Game Theory}}.
\newblock 2007.

\bibitem{DBLP:journals/corr/abs-1011-1279}
C.~H. Papadimitriou and G.~Pierrakos.
\newblock On optimal single-item auctions.
\newblock In {\em STOC}, 2011.

\bibitem{DBLP:journals/iandc/PapadimitriouY86}
C.~H. Papadimitriou and M.~Yannakakis.
\newblock A note on succinct representations of graphs.
\newblock {\em Information and Control}, 71(3):181--185, 1986.

\bibitem{DBLP:conf/focs/PapadimitriouY00}
C.~H. Papadimitriou and M.~Yannakakis.
\newblock On the approximability of trade-offs and optimal access of web
  sources.
\newblock In {\em FOCS}, 2000.

\bibitem{RS07}
T.~Roughgarden and M.~Sundararajan.
\newblock Is efficiency expensive?
\newblock In {\em 3rd Workshop on Sponsored Search}, 2007.

\bibitem{VY}
S.~Vassilvitskii and M.~Yannakakis.
\newblock Efficiently computing succinct trade-off curves.
\newblock {\em Theoretical Computer Science}, 348:334--356, 2005.

\end{thebibliography}

\newpage
\appendix

\section*{Appendix}

We provide here proofs that did not appear in the main body, due to space limitations.

\section{Proof of Proposition~\ref{prop:convex}} \label{app:convex}

Let $F$ denote the cumulative distribution function of $f$. We say a distribution satisfies the {\em monotone hazard rate condition} iff the ratio $\frac{1-F(t)}{f(t)}$ is non-increasing; notice in particular that any binary-valued distribution satisfies the monotone hazard rate condition.

We start with a simple example with 2 bidders for which the Pareto curve is not convex, while the bidders' valuations are drawn independently from two {\em non-identical} distributions of support 2; this is presented in Figure~\ref{example3}.

\begin{figure}[h]
\centering
\includegraphics[scale = 0.5]{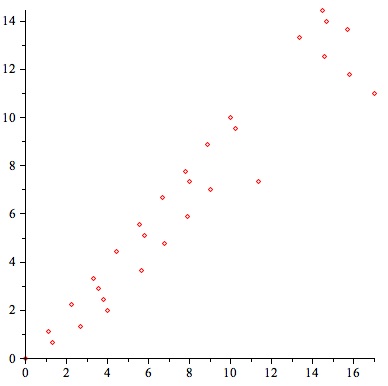}
\caption{The objective space of all two-player auctions when $(v_1^1,v_1^2)=(11,20),(v_2^1,v_2^2)=(2,5)$ and $(f_i^1,f_i^2)=(1/3,2/3)$ for $i=1,2$. }\label{example3}
\end{figure}

On the positive side we show that the Pareto curve is convex for a single bidder with valuation drawn from a monotone hazard rate distribution. Since in the discrete case one can simply enumerate the set of all feasible auctions in linear time anyway, this result is of interest only in the continuous case.

Let $M(r)$ be the single-player mechanism (pricing) that makes a take-it-or-leave-it offer of $r$ to the player; notice that in this case the Pareto curve is a mono-parametric curve on the plane where $x=\sw[M(r)]$ and $y=\rev[M(r)]$. We next show that this monoparametric curve is in fact convex; the following is a necessary and sufficient condition for (local) convexity of mono-parametric (continuous) curves:%~\cite{Liu:1997:CPC:282183.282257}:

\[ \left| \begin{array}{cc}
x'(r) & x''(r)\\
y'(r) & y''(r) \end{array} \right| =x'(r)y''(r)-y'(r)x''(r)\geq0\]

Substituting $x=\sw[M(r)]=\int_r^1vf(v)\,dv$ and $y=\rev[M(r)]=r\int_r^1f(v)\,dv$ and doing the algebra we get the following necessary and sufficient condition:
\begin{equation}\label{convexity}
r(f(r))^2+\int_r^1f(v)\,dv\cdot\left[f(r)+rf'(r)\right]\geq0
\end{equation}
By definition, for monotone hazard rate distributions the ratio $\frac{1-F(r)}{f(r)}$ is a non-increasing function of $r$; taking derivatives\footnote{We make the analytically convenient assumption that $f$ is differentiable here.} we get that for these distributions it must hold that:
\begin{equation}\label{mhr}
(1-F(r))f'(r)+(f(r))^2\geq0
\end{equation}
By substituting $f'(r)$ from (\ref{mhr}) into the LHS of (\ref{convexity}) we get that it is indeed $\geq 0$.

An interesting open question is whether the property of convex Pareto curves extends to 2 or more bidders with valuations distributed {\em identically} according to some monotone hazard rate distribution.

\section{Omitted Proofs from Theorem~\ref{thm:2p-hard}} \label{app:2p-hard}

\subsection{Proof of Regularity} \label{app:reg}

We define the virtual valuation of a bidder with valuation $v_i$,  taking values from $\{v_i^1,\ldots,v_i^k\}$ with probabilities $\{f_i^1,\ldots,f_i^k\}$, as follows: 
$$\phi_i^j = v_i^j - (v_i^{j+1}-v_i^j)\frac{f_i^{j+1}+\ldots+f_i^k}{f_i^j}$$
Substituting for the setting in hand, we get that the virtual valuation of the row player is $\phi_r^j = 2j-k$, while the virtual valuation of the column player is:
$$\phi_c^j = j+b'_j - (b'_{j+1}-b'_j+1)(k-j)$$
A distribution is called regular iff $\phi_i^j\leq \phi_i^{j+1}$; it follows immediately that the row player's distribution is regular, while for the second player we need that
$$j+1+b'_{j+1} - (b'_{j+2}-b'_{j+1}+1)(k-j-1) \geq j+b'_j - (b'_{j+1}-b'_j+1)(k-j)$$
and since $b'_{j+1}\geq b'_{j+2}$ it suffices that
$$j+1+b'_{j+1} - (k-j-1) \geq j+b'_j - (b'_{j+1}-b'_j+1)(k-j)$$
Rearranging terms and doing some calculations we get that it suffices to have
$b'_{j+1}\geq b'_j-2/k$, which follows from the fact that $\sum_{i=1}^k b_i' = 1/(10k)$.

\subsection{Proof of Lemma~\ref{lem:r}} \label{app:r}

Recall that this lemma implies that only mechanisms in $\mathcal{R}$ can potentially satisfy (\ref{eqn:goal}).
To prove it we proceed as follows: Consider a partition of the allocation matrix $A$ into three subsets:
(i) the subset of the matrix above the secondary diagonal, (ii) the subset below the diagonal and (iii) the diagonal itself.
The gist of the proof is this: The contribution to the welfare for subsets (i) and (ii) is maximized for mechanisms in $\mathcal{R}$.
The welfare contribution from (i) and (ii) for {\em any} other mechanism (i.e. not in $\mathcal{R}$) is strictly smaller by a 
quantity sufficiently large that outweighs any effects on the welfare from subset (iii).

Let us first compute $\min_{M \in \mathcal{R}} \sw(M)$, the minimum welfare of a mechanism in $\mathcal{R}$.
It is easy to see that the welfare minimizing mechanism is the one that assigns the item to the row player for all entries in the diagonal.
That is, it corresponds to $S = \emptyset$ in the bijection defined in the body of the proof, hence we have
\[ \min_{M \in \mathcal{R}} \sw(M) = (2/3)\cdot (k-1)k(k+1)+ (1/2)\cdot k(k+1) + \littlesum_{i=2}^k (i-1)b'_i.\]
So, we obtain the second inequality of the lemma.
To bound from above $\max_{M \not\in \mathcal{R}} \sw(M)$ we consider three cases:
Consider first the subset of the allocation matrix above the diagonal. If any entry of this subset is allocated to the column
player, then it is not hard to see that this would lower the welfare value by at least $1- \max_i b'_i  \geq 1-1/(10k) \geq 0.9$.
Similarly, if any entry is not allocated al all (i.e. the auctioneer keeps the item), this would cost us at least $1$.
For the subset below the diagonal the situation is analogous; if an entry is allocated to the row player, this costs us
at least $1$, same if an entry is not allocated at all. This decrease in the value of the welfare cannot be compensated
by the diagonal entries; indeed, if all such entries are allocated to either player, contribution to the welfare lies in $[k(k+1)/2, k(k+1)/2+1/(10k)]$
(an interval of length $1/(10k) \leq 1/10.$) As a consequence, any mechanism that disagrees with $\mathcal{R}$ either below
or above the diagonal has welfare strictly smaller than $\min_{M \in \mathcal{R}} \sw(M)$. Now consider a mechanism
that agrees with $\mathcal{R}$ except potentially at the diagonal. Note that a non-allocated entry of the diagonal costs at least $1$,
and again this cannot be compensated by the $1/10$ potential contribution of the column player. This completes the proof.

\subsection{Proof of Theorem~\ref{thm:2p-hard}} \label{app:2p-bic}

The high-level idea is similar as the proof presented in the body of the paper (for the exact version of the welfare objective) 
but the details are more elaborate. At a high-level, the difficulty is that the two objective functions (welfare, revenue)
depend on each other in a subtle way. Thus, a more complicated construction is required to ``decouple''  these two criteria. 
(It is not hard to see that the construction presented for the exact version fails for the bi-objective problem.) Very roughly,
the reduction ends up using non-uniform distributions on larger and carefully selected supports.

\medskip

As before, our reduction is from Partition. We start with a set $A=\{a_1,\ldots,a_k\}$ of positive numbers (rescaled so that they sum to a sufficiently small positive constant) and we want to decide whether there exists a partition of this set. We will construct an instance of the auction problem with $2$ players and distributions of support size $2k+1$. Before presenting the actual instance we first give some intuition behind the construction.

Similarly, our goal  is to establish a bijection between an appropriate subset of feasible mechanisms and subsets $S$ of  $[k]$; since the number of feasible mechanisms greatly exceeds that of subsets of $A$, we have to limit our attention to a subset of feasible mechanisms. To that end, we are going to appropriately pick the target values for welfare and revenue, so that the only relevant auctions in our reduction will be those that allocate the item to player $2$ (column player) for entries above the diagonal (i.e. $(v_1^i,v_2^j)$ with $j>i$) and to player $1$ (row player) for entries below the diagonal (i.e. $(v_1^i,v_2^j)$ with $j<i$). We will also exclude the possibility of not allocating the item across the diagonal entries of the allocation matrix, so that the only relevant auctions are the $2^{2k+1}$ different auctions that allocate to either player $1$ or $2$ across the diagonal, all of which respect monotonicity and are therefore feasible. Call this subset of mechanisms $\mathcal{R}$. We will then use the $i$-th {\em odd} entry of the diagonal to encode the decision of including or not the $i$-the element of $A$ in the set $S$: we shall include element $i$ iff player 1 gets allocated for entry $(2i-1,2i-1)$ of the allocation matrix.

The first step is therefore to ensure that the only relevant mechanisms are the ones with the above property. To this end we ask that the following relation between the players' valuations holds:
\begin{equation}\label{property1}
v_1^i < v_2^i < v_1^{i+1}<v_2^{i+1}<v_1^{i+2}, \text{ for }i=1\ldots2k-1
\end{equation}
Relation~(\ref{property1}) implies that the social welfare from the entries on top and below the diagonal is maximized by a mechanism that allocates to player 2 on top of the diagonal and to player 1 below the diagonal; therefore by setting a sufficiently high welfare target $W$ in our reduction we will be able to guarantee that the only relevant auctions will have this format.

More specifically, the distributions are defined as follows (where $\eps>0$ a sufficiently small parameter):

The (unnormalized) probabilities of the two players are:

\[ f_j^i = \left\{ \begin{array}{ll}
         1 & \mbox{if $i$ is odd};\\
        \epsilon & \mbox{if $i$ is even}.\end{array} \right. \]
for both players $j=1,2$, where $\epsilon$ is some small constant to be determined later.
(The point of the small probability elements, is to achieve the desired decoupling between welfare and revenue; it may be convenient for the reader to think of $\eps$ as if it was $0$. In the course of the proof, we will provide a sufficient upper bound on its magnitude.)

The values of player $1$ are:
\[ v_1^i = \left\{ \begin{array}{ll}
         i+a_{\frac{i+1}{2}} & \mbox{for $i\in\{1,3,\ldots,2k-1$\}};\\
        i+a_{\frac{i}{2}}\left(1+\frac{4}{(2k-i+2)(1+\epsilon)}\right) & \mbox{for $i\in\{2,4,\ldots,2k$\}};\\
        2k+1 & \mbox{for $i=2k+1$}.\end{array} \right. \]

The values of player $2$ are:
\[ v_2^i = \left\{ \begin{array}{ll}
         i&\mbox{for $i\in\{1,3,\ldots,2k-1$\}};\\
        i&  \mbox{for $i\in\{2,4,\ldots,2k$\}};\\
        2k+1 & \mbox{for $i=2k+1$}.\end{array} \right. \]

We note that there are $3$ different scales of numbers in the reduction. The values of the elements in the support (big scale), the magnitudes of the elements of $A$ (medium scale), and the magnitude of $\eps$ (small scale). 

Recall we would like to make the sum of welfare and revenue remain constant across all mechanisms in $\mathcal{R}$; by doing so we can ensure that whenever a mechanism achieves the target welfare and revenue values, the relations will in fact hold with equality, allowing us to encode an instance of Partition. To achieve that, we impose an even stronger requirement: In particular, consider the following entries of the allocation matrix: $(v_1^i,v_2^j), j=i\ldots 2k+1$ and $(v_1^j,v_2^i), j=i\ldots 2k+1$, where $i$ is an odd number. Assuming our auction has the format discussed above, entries $(v_1^i,v_2^j), j=i+1\ldots 2k+1$ are allocated to player 2, entries $(v_1^j,v_2^i), j=i+1\ldots 2k+1$ are allocated to player 1, and we are left to decide which player to allocate entry $(v_1^i,v_2^i)$ to. Now let $\sw_j^i$ (resp. $\rev_j^i$), where $i$ is odd and $j\in\{1,2\}$, denote the welfare (resp. revenue) that results from the aforementioned entries if we allocate entry $(v_1^i,v_2^i)$ to player $j$. The stronger requirement that we impose is that $\sw_1^i+\rev_1^i = \sw_2^i+\rev_2^i$ for all odd $i$. To see what this entails we next write the expressions for $\sw_j^i$ and $\rev_j^i$:
\begin{eqnarray*}
\sw_1^i &=& v_1^i + \sum_{j=\frac{i+1}{2}}^kv_1^{2j+1}+ \sum_{j=\frac{i+1}{2}}^kv_2^{2j+1}+\epsilon\cdot\left(\sum_{j=\frac{i+1}{2}}^kv_1^{2j}+\sum_{j=\frac{i+1}{2}}^kv_2^{2j} \right)\\
\rev_1^i &=& v_1^i\left(\frac{2k-i+1}{2}(1+\epsilon)+1\right) + v_2^{i+1}\left(\frac{2k-i-1}{2}+ \frac{2k-i+1}{2}\epsilon+1\right)\\
\sw_2^i &=& \sum_{j=\frac{i+1}{2}}^kv_1^{2j+1}+v_2^i + \sum_{j=\frac{i+1}{2}}^kv_2^{2j+1}+\epsilon\cdot\left(\sum_{j=\frac{i+1}{2}}^kv_1^{2j}+\sum_{j=\frac{i+1}{2}}^kv_2^{2j} \right)\\
\rev_2^i &=& v_1^{i+1}\left(\frac{2k-i-1}{2}+ \frac{2k-i+1}{2}\epsilon+1\right)+v_2^i\left(\frac{2k-i+1}{2}(1+\epsilon)+1\right)\\
\end{eqnarray*}
Notice that $\sw_1^i-\sw_2^i = v_1^i-v_2^i$. In order to have $\sw_1^i+\rev_1^i = \sw_2^i+\rev_2^i$ we ask that:
\begin{equation}\label{property2}
\rev_1^i-\rev_2^i = v_2^i-v_1^i
\end{equation}

The only difficulty in satisfying the relation above, is that equation~(\ref{property2}) necessarily imposes some additional constraints on the values $v_1^{i+1}, v_2^{i+1}$; we get around this by using a support of roughly twice the size of $A$, and using only half of the points in the support to encode the elements of $A$; the remaining points are assigned a very small probability, so that they have a negligible effect on the overall welfare and revenue. It is now easy to verify that the aforementioned choice of distributions for the two players satisfies properties~(\ref{property1}) (since $a_i$ are assumed to be sufficiently small) and~(\ref{property2}) above.

For the aforementioned choice of values $v_j^i$ the social welfare contributions now become:
$$\sw_1^i = v_1^i+X_i = i+a_{\frac{i+1}{2}} +X_i \text{ and } \sw_2^i = v_2^i+X_i = i+X_i$$
for some $X_i$ whose exact value is irrelevant (and can be derived from the expressions above); analogously for revenue we have:
$$\rev_1^i = v_2^i+Y_i=i+Y_i \text{ and }\rev_2^i = v_1^i+Y_i =  i+a_{\frac{i+1}{2}} +Y_i$$ for some $Y_i$. We therefore have $\sw_1^i+\rev_1^i = \sw_2^i+\rev_2^i = 2i + X_i+Y_i + a_{\frac{i+1}{2}}$ and we have thus ensured that all mechanisms with the property of allocating to player 2 on top of the diagonal and to player 1 below the diagonal have a sum of (total) revenue and welfare that can be upper-bounded as follows:
$$\sw+\rev\leq\sum_{\text{odd }i} (2i+X_i+Y_i+a_{\frac{i+1}{2}}) +v_1^{2k+1}+v_2^{2k+1}+\epsilon\cdot 2n^2(2k+1)$$
where the last term is an upper bound on the contribution in revenue and welfare of the even rows and columns (where we took into account that the maximum contribution of any entry is at most the maximum value appearing in the support of any player, namely $2k+1$). We next fix the value of $\epsilon$ so that the quantity
$$\epsilon \cdot 2n^2(2k+1)$$ is smaller than the accuracy used in the rational numbers $a_i$.
Note that this can always be done with an $\eps$ that has polynomially many bits -- since the $a_i$'s are by assumption rational numbers with polynomially many bits.

We are now ready to argue that there exists a partition of $A$ iff there exists an auction with:
\begin{equation}\label{eq:eq1}
\sw\geq \sum_{\text{odd }i} (i+X_i) + 2k+1+\frac{1}{2}\sum_{i=1}^k a_i \,\,\,\,\,\text{and}\,\,\,\,\,\,\, \rev\geq \sum_{\text{odd }i} (i+Y_i) +2k+1+ \frac{1}{2}\sum_{i=1}^k a_i
\end{equation}

Given any partition $S$ of $A$, we can turn it into a mechanism with the above welfare and revenue guarantees by allocating to player 2 on top of the diagonal, player 1 below the diagonal and allocating to player 1 for entries $(2i-1,2i-1), i=1\ldots k$, for all $i$ s.t. $a_i\in S$; the even entries on the diagonal, as well as the entry $(2k+1,2k+1)$ can be allocated to either player.

Conversely, given a mechanism with welfare and revenue as above we can get a partition of $A$. To see how, first notice that because of property~(\ref{property1}) above (and because $a_i$ are much smaller) the only mechanisms that can achieve a social welfare of at least $\sum_{\text{odd }i}i+X_i$ and revenue of at least $\sum_{\text{odd }i}i+Y_i$ are those that allocate to player 2 above the diagonal, player 1 below the diagonal, and always allocate to either player 1 or player 2 on the diagonal. In the discussion above we established that for those mechanisms it holds that:
\begin{equation}\label{eq:eq2}
\sw+\rev \leq \left(\sum_{\text{odd }i} (i+X_i) + 2k+1+\frac{1}{2}\sum_{i=1}^k a_i\right) + \left(\sum_{\text{odd }i} (i+Y_i) +2k+1+ \frac{1}{2}\sum_{i=1}^k a_i\right) + \epsilon\cdot 2n^2(2k+1)
\end{equation}
By our choice of $\epsilon$ and inequalities~(\ref{eq:eq1}) and~(\ref{eq:eq2}) it follows that the inequalities in~(\ref{eq:eq1}) must hold with equality; we then get a partition by including in $S$ all elements $i$ for which $(2i-1,2i-1)$ is allocated to player 1. This completes the proof. 

\medskip

\section{Proof of Theorem~\ref{thm:exp}} \label{app:exp}

The construction is similar to the reduction for the exact problem in Theorem~\ref{thm:2p-hard}. We will construct a $2$ player auction and we will argue
that there exists an appropriate {\em subset} of the Pareto curve with exponential size. 

We describe an instance with $2$ players, both with uniform distributions over the following supports of size $k$: Player 1 has values in $\{1+a_1,2+a_2,\ldots,k+a_k\}$, and player 2 has values in $\{1,2,\ldots,k\}$, with all $a_i<<1$; the exact value of $a_i$ will be determined later. Assume player 1 is on the horizontal axis of the two-player allocation matrix and consider a mechanism that allocates the item to player 2 for all entries above the diagonal (i.e. $(v_1^i,v_2^j)$ with $j>i$), to player 1 for all entries below the diagonal (i.e. $(v_1^i,v_2^j)$ with $j<i$), and to either player on the diagonal (see Table~\ref{exp-Pareto}); such a mechanism can be concisely described through the diagonal entries of its allocation matrix. In what follows we write $\sw(A[v_1^1,v_2^1],\ldots,A[v_1^k,v_2^k])$ and $\rev(A[v_1^1,v_2^1],\ldots,A[v_1^k,v_2^k])$ to denote the welfare and revenue respectively of this mechanism. We note that this subset of feasible mechanisms maximizes the welfare over all feasible mechanisms; hence, it suffices to show that the Pareto set {\em of this subset} is exponential. In fact, we will choose the $a_i$'s appropriately so that {\em all} these mechanisms are undominated.

Our goal is to pick values $a_1,\ldots,a_k$ such that all $2^k$ different mechanisms of the above type will be Pareto optimal. To do that we observe that under some mild conditions on the $a_i$, satisfied by picking for example $a_i=3^{i-1}$ (and normalizing so that the normalized sum is small e.g. $< 1/1000$), we can impose orderings on the welfares and revenues of those $2^k$ mechanisms that go in opposite directions, i.e. one mechanism has larger revenue than another  iff it has smaller welfare. To this end we make the following two claims, which can be verified by explicitly writing down the expressions for revenue and welfare and doing some elementary calculations:

\paragraph{Claim 1: } If $a_i>0$ and $a_i < \frac{n-i}{n-i+1}a_{i+1}$ for all $i$, it holds that:
\begin{enumerate}
\item $\sw(\xi_1,\ldots,\xi_{i-1},1,\xi_{i+1},\ldots,\xi_k)> \sw(\xi_1,\ldots,\xi_{i-1},2,\xi_{i+1},\ldots,\xi_k)$
\item $\rev(\xi_1,\ldots,\xi_{i-1},1,\xi_{i+1},\ldots,\xi_k)< \rev(\xi_1,\ldots,\xi_{i-1},2,\xi_{i+1},\ldots,\xi_k)$
\end{enumerate}
for any $\xi_i\in \{1,2\}$.

\paragraph{Claim 2: } If $\sum_{j=1}^{i-1}a_j< a_i$ and $a_i < \frac{n-i}{2(n-i+1)}a_{i+1}$ for all $i$, it holds that:
\begin{enumerate}
\item $\sw(1,\ldots,1,2,\xi_{i+1},\ldots,\xi_k)< \sw(2,\ldots,2,1,\psi_{i+1},\ldots,\psi_k)$
\item $\rev(1,\ldots,1,2,\xi_{i+1},\ldots,\xi_k)> \rev(2,\ldots,2,1,\psi_{i+1},\ldots,\psi_k)$
\end{enumerate}
for any $\xi_i,\psi_i\in \{1,2\}$; note that in general we may have $\psi_i\neq\xi_i$.

Intuitively Claim 1 says that switching any 1 into a 2 on any entry of the diagonal has the effect of decreasing the welfare while increasing the revenue; Claim 2 on the other hand says that the (negative) effect that a 2 on the diagonal has on the welfare, is bigger for 2's that are placed in higher positions --and in analogy for revenue.

Using the above two claims one can now prove that for any two mechanisms $M_1=(\xi_1,\ldots,\xi_k)$ and $M_2=(\psi_1,\ldots,\psi_k)$, it holds that $\sw(M_1)>\sw(M_2)$ iff $\rev(M_1)<\rev(M_2)$, and therefore all the $2^k$ mechanisms are Pareto optimal. We convey the idea by means of the following example, for $k=5$; consider the two mechanisms $M_1=(1,2,1,1,1)$ and $M_2 = (1,1,2,2,1)$. We then have:
$$\sw(M_1)=\sw(1,2,1,1,1)>\sw(2,2,2,1,1)>\sw(1,1,1,2,1)>\sw(1,1,2,2,1)=\sw(M_2)$$
with the inequalities above following from Claim 1.1, 2.1 and 1.1 respectively; in complete analogy we can show that $\rev(M_1)<\rev(M_2)$.

\begin{table}[h]
\begin{center}
  \begin{tabular}{|c||c|c|c|c|c| }
    \hline
    $v_2^k=k$&2&2&2&\ldots&1 or 2\\\hline
    $v_2^{k-1}=k-1$&2&2&\ldots&1 or 2&1\\\hline
    $\vdots$&$\vdots$&$\vdots$&$\reflectbox{$\ddots$}$&$\vdots$&$\vdots$\\\hline
    $v_2^2=2$&2&1 or 2&\ldots&1&1\\\hline
    $v_2^1=1$&1 or 2&1&\ldots&1&1\\\hline\hline
    &$v_1^1 = 1+a_1$&$v_1^2 = 2+a_2$&\ldots&$v_1^{k-1} = k-1+a_{k-1}$&$v_1^k = k+a_k$\\\hline
  \end{tabular}
\end{center}
\caption{An instance with an exponential size Pareto set}\label{exp-Pareto}
\end{table}

\section{Omitted Details from Theorem~\ref{thm:many}} \label{app:many}

The graph is the following: We assume that each bidder has two values $\{v_i^1,v_i^2\}$, with $v_i^2>v_i^1$, and for each of the $2^n$ valuation tuples we create a node in the graph labeled by this tuple $(v_i)_i$. We connect two such nodes if their labels differ in exactly one coordinate, say the $i$-th one; the weight of this edge is $\prod_{j\neq i} f_j^{k_j}\cdot(f_i^1v_i^1+f_i^2v_i^2)$ for welfare and $\prod_{j\neq i} f_j^{k_j}\cdot v_i^1(f_i^1+f_i^2)$ for revenue. We also introduce a set of dummy nodes: for every node with label $(v_i)_i$, where $|\{i|v_i=v_i^2\}|=k$, we introduce $k$ dummy nodes and associate each one of them with the bidder $i$ for whom $v_i=v_i^2$. We then add an edge between this node and all its dummy nodes; the weight of the edge connecting to the dummy node of the $i$-th player is $\prod_{j\neq i}f_j^{k_j}\cdot f_i^2v_i^2$ both for welfare and revenue.

It is easy to verify that every matching in the above graph corresponds to a deterministic truthful mechanism as follows: for each bid vector, consider the corresponding node in the graph. If it is not matched to any other node, then allocate nothing; if it is matched to a dummy node, then allocate the item to the bidder that is associated with this dummy node; otherwise, it is matched to another non-dummy node and these two nodes differ in exactly one coordinate, say the $i$-th, in which case we allocate to the $i$-th bidder. It is easy to check that the resulting mechanism is both feasible and monotone (IC and IR). Moreover, the welfare (or revenue depending on the kind of weights used) is equal to the weight of the matching.

\end{document}